\documentclass[aps,twocolumn,showpacs,pra,citeautoscript,amsmath,amssymb,floatfix,superscriptaddress, bbm, changes,10pt]{revtex4-1}
\pdfoutput=1
\usepackage{amsmath}
\usepackage{amssymb}
\usepackage{epsfig}
\usepackage{color}
\usepackage{amsthm}
\usepackage{hyperref}
\usepackage{appendix}
\usepackage{cancel}
\usepackage{stmaryrd}
\usepackage{soul}
\usepackage{comment}
\usepackage{changes}
\usepackage{empheq}

\newcommand*\widefbox[1]{\fbox{\hspace{2em}#1\hspace{2em}}}
 
% avoids incorrect hyphenation, added Nov/08 by SSR
\hyphenation{ALPGEN}
\hyphenation{EVTGEN}
\hyphenation{PYTHIA}

\def\<{\langle}
\def\>{\rangle}

\newcommand{\be}{\begin{eqnarray} \begin{aligned}}
\newcommand{\ee}{\end{aligned} \end{eqnarray} }
\newcommand{\benn}{\begin{eqnarray*} \begin{aligned}}
\newcommand{\eenn}{\end{aligned} \end{eqnarray*} }

\newcommand{\ben}{\begin{eqnarray} \begin{aligned}}
\newcommand{\een}{\end{aligned} \end{eqnarray} }
\newcommand\numberthis{\addtocounter{equation}{1}\tag{\theequation}}

\newcommand{\bc}{\begin{center}}
\newcommand{\ec}{\end{center}}

				% VECTORS
				% ABS

\newcommand{\e}{\mathrm{e}}

%
%\newtheorem{theorem}{Theorem}[section]
%\newtheorem{proposition}{Proposition}[section]
%MH
\newcommand{\beq}{\begin{eqnarray} \begin{aligned}}
\newcommand{\eeq}{\end{aligned} \end{eqnarray} }
\newcommand{\bea}{\begin{array}}
\newcommand{\eea}{\end{array}}

\newcommand{\bee}{\begin{enumerate}}
\newcommand{\eee}{\end{enumerate}}
\newcommand{\bei}{\begin{itemize}}
\newcommand{\eei}{\end{itemize}}
\newtheorem{theorem}{Theorem}

%MH

\newtheorem{lemma}[theorem]{Lemma}

  %%% Space spanned by a set of vectors.

%\newcommand{\note}[1]{\textsc{#1} $\clubsuit$}

%%%%%%%%%%%%%% Greek Letters %%%%%%%%%%%%%%%%%%%%%%

\usepackage{amsfonts}

\def\01{\{0,1\}}

\newcommand{\ket}[1]{|#1\rangle}
\newcommand{\bra}[1]{\langle#1|}

\newcommand{\ketbra}[2]{|#1\rangle\! \langle#2|}

 % inproduct, < | >

\usepackage{mathtools}

\DeclarePairedDelimiterX{\infdivx}[2]{(}{)}{%
  #1\;\delimsize\|\;#2%
}

\def\<{\langle}
\def\>{\rangle}

%_\alpha(#1,H)}
% oh no, I can't believe we're hatting, especially here!

%_\alpha(#1,H)}
%Michal's defs

% jono'ss

\newtheorem*{rep@theorem}{\rep@title}
\newcommand{\newreptheorem}[2]{%
\newenvironment{rep#1}[1]{%
 \def\rep@title{#2 \ref{##1} (restatement)}%
 \begin{rep@theorem}}%
 {\end{rep@theorem}}}
\makeatother

\def\e{\mathrm{e}}
\def\i{\mathrm{i}}

\newreptheorem{thm}{Theorem}
\newreptheorem{lem}{Lemma}

\begin{document}

\title{
%Universal bounds in finite-bath thermodynamics (The second law with model-independent finite bath corrections)
Finite-bath corrections to the second law of thermodynamics
}

\author{Jonathan G. Richens}
\email{jonathan.richens08@ic.ac.uk}
\affiliation{Controlled Quantum Dynamics Theory Group, Department of Physics, Imperial College London, London SW7 2AZ, UK.}
\affiliation{Department of Physics and Astronomy, University College London,
Gower Street, London WC1E 6BT, UK.}
\author{\'{A}lvaro M. Alhambra}
\affiliation{Department of Physics and Astronomy, University College London,
Gower Street, London WC1E 6BT, UK.}
\author{Lluis Masanes}
\affiliation{Department of Physics and Astronomy, University College London,
Gower Street, London WC1E 6BT, UK.}

\begin{abstract}
The second law of thermodynamics states that a system in contact with a heat bath can undergo a transformation if and only if its free energy decreases. 
However, the ``if" part of this statement is only true when the effective heat bath is infinite.
In this article we remove this idealization and derive corrections to the second law in the case where the bath has a finite size, or equivalently finite heat capacity. 
This can also be translated to processes lasting a finite time, and we show that thermodynamical reversibility is lost in this regime. 
We do so in full generality, that is without assuming any particular model for the bath;
% anything about the structure of the heat bath nor the dimensionality of its Hilbert space.
the only parameters defining the bath are its temperature and heat capacity. 
 We find connections with second order Shannon information theory, in particular in the case of Landuaer erasure. 
We also consider the case of non-fluctuating work, and derive finite-bath corrections to the min and max free energies employed in single-shot thermodynamics.
\end{abstract}
\maketitle

\section{Introduction}

%In many physical processes we see systems spontaneously coming to thermal equilibrium with its environment. The subject of classical thermodynamics is the analysis of how different such equilibria are connected through operations such as isotherms and adiabats. This is now a very well-established framework, whose origins can be traced back to Bolzmann, Gibbs and others, and has been very succesful in describing properties of systems at the macroscopic scale.

Currently there is an ongoing effort to generalize the laws of thermodynamics to the microscopic regime, motivated by (amongst other things) the development of quantum technologies, the miniaturization of devices and biophysics. 
On the one hand, we are seeing a steady growth in our experimental capabilities
~\cite{blickle2012realization, trotzky2012probing, abah2012single, koski2014experimental, pekola2015towards}, and on the other, new theoretical tools and fomalism are being developed (see~\cite{goold2016role, vinjanampathy2015quantum} for a review). 
In this new regime, some of the standard assumptions of macroscopic thermodynamics are not necessarily valid.
For example, the fluctuations of some quantities, like work, can be much larger than the scale of the system, and hence, can no longer be neglected (a generalization of the second law to the case when fluctuations are bounded was obtained in~\cite{richens2016quantum}). 
Another example is that the environment or heat bath of a microscopic system cannot always be assumed to have infinite size.
In particular, when the time scales involved in a process are small, the effective heat bath is necessarily finite (as implied by Lieb-Robinson bounds). In this work we obtain corrections to the second law in the case that the bath has finite size.

%One of the key processes to study in thermodynamics is how systems evolve into thermal equilibrium, given by their Gibbs state $e^{-\beta H}/Z$. This happens very generically when a system interacts weakly with a large thermal bath \cite{} \alv{shaky statement}. One key assumption of macroscopic thermodynamics is that there is that the environment of a system, or the hat bath, is has infinite size. an interaction between a small system and an infinite-sized one (the \emph{heat bath}), which is generally required for full thermalization to occur, taking the system's state into the Gibbs state $e^{-\beta H_S}/\text{Tr}(e^{-\beta H_S})$, where $H_S$ is the Hamiltonian of the system and $\beta$ the inverse temperature of the bath.

The (macroscopic) second law of thermodynamics can be stated as follows.
Consider a system in an arbitrary  state, with average energy $U$ and entropy $S$.
Suppose that, in order to transform the system, we can make use of a heat bath at temperature $T$, and extract an average amount of work $\langle W \rangle$. 
A necessary condition for the possibility of a transformation is
\begin{equation}
\label{eq:landauer}
  \langle W \rangle 
  \leq 
  -\Delta U + T\Delta S
  =-\Delta F
  \ ,
\end{equation}
where $\Delta U$ and $\Delta S$   are the changes in energy and entropy, and the free energy is defined by $F=U-TS$.
Remarkably, if the available heat bath has infinite heat capacity this condition is also sufficient~\cite{alhambra2016second}.
This also implies that condition~\eqref{eq:landauer} is tight. Hence, it is in principle possible to implement such transformation and extract $\langle W \rangle =-\Delta F$ work.
However, in the microscopic scale, the assumption that the bath is infinite might not always be a reasonable approximation.
Recall that, within a finite time, a system can only interact with a finite region of its environment.
Hence for finite-time processes one should consider an effective bath with a finite heat capacity.

One can ask then, how does this fundamental bound~\eqref{eq:landauer} change when we take into account the finiteness of the heat bath? 
We address this question for the case where the heat capacity of the bath $C$ is not necessarily infinite.
We find that the necessary and sufficient condition for the possibility of a transformation is
\begin{equation} \label{eq:corfree}
  \langle W\rangle 
  \leq 
  -\Delta F -\frac{1}{ \beta}
  D\left[ P_C (s',E') \| P_\infty (s',E') \right]
  \ ,
\end{equation}
where $D$ is the relative entropy, $P_C (s',E')$ is the joint probability distribution for the final state of the system $s'$ and the energy of the bath $E'$ when this has heat capacity $C$ and $P_\infty (s',E')$ is the final distribution when the bath has infinite heat capacity. 
As we will see, this distribution  factorizes as $P_\infty (s',E') = P(s')p_G(E')$, with the final distribution of the bath being the Gibbs or thermal state $p_G (E')$. From~\eqref{eq:corfree} we derive a necessary condition that only the initial and final states of the system, and the heat capacity $C$ and temperature $T=1/\beta$ of the bath
\begin{equation} \label{eq:corfree2}
\langle W\rangle \leq -\Delta F -\frac{1}{2 C\beta}\Delta S^2
\ .
\end{equation}

Note that this expression converges to the macroscopic second law~\eqref{eq:landauer} in the limit where the heat capacity of the bath $C$ is large; and provides a stronger condition when $C$ is finite.
Also, condition~\eqref{eq:corfree2} becomes sufficient when the final state is maximally mixed.
In this case, expression~\eqref{eq:corfree2} is not just an upper bound, but the maximal work extractable.
Not that this is smaller than the minimal work invested in the preparation of the initial state. Therefore, we conclude that thermodynamics reversibility requires an infinite heat bath.

%in certainwhen restricting the processes to optimal work extraction of a system. Hence, the right-hand side of~\eqref{eq:corfree2} is the maximal work extractable. 

%For the particular process of Landauer erasure, we also provide a simple, necessary and sufficient condition, which involves the second order information-theoretic measure \emph{varentropy}.

There are situations where stochastic fluctuations of work may be undesirable, and one would like to manipulate only definite amounts of useful energy. To address this, the framework of single-shot thermodynamics was established in \cite{aaberg2013truly,horodecki2013fundamental}, where the concept of \emph{deterministic work} was introduced. The authors showed how work can be understood as a shift of energy in the storage system that happens with a very large probability. They also gave expressions for how much one can extract from any given state, and of how much one needs to create it, assuming access to an infinite heat bath.
%Access to an infinite bath grants a certain amount of power not only in terms of extraction of work distributions.  One can also extract and expend the so-called single-shot \emph{deterministic work} \cite{aaberg2013truly,gemmer2015single} (perhaps up to some small $\epsilon$ probability of failure), unlike the work in expression \eqref{eq:corfree}, which in general will exhibit arbitrarily large fluctuations \cite{richens2016quantum}. 
%
 
 We here explore how that amount of deterministic work content and work cost of forming a state changes when one is limited to a finite bath and we find that achieving these processes with arbitrary accuracy becomes impossible. This may not the case, however, if one allows for an additional small error probability during the processes, given by the tails of the distribution of energy of the heat bath. For such case, we give expressions for the deterministic work for a given probability of failure. 

Finite bath thermodynamics has garnered some interest in recent years, including finite bath corrections to Carnot efficiency, Landauer's principle and the Jarzinski equality  \cite{woods2015maximum,reeb2014improved,campisi2009finite,scharlau2016quantum}. All of these contributions make assumptions on the on the particular structure of the heat bath.  Our contribution has the advantage of not making any assumption on the structure of the heat bath.

The structure of this paper is as follows: in Section \ref{sec:finite}, we provide a model-independent characterization of a finite bath. 
In Section \ref{sec:TOfluc} describe the general model for thermodynamic system-bath interactions and consider work extraction protocols with fluctuating work. We use this to show our main result and explore the much studied case of Landauer erasure. 
In Section \ref{sec:detTO} we further explore the possible interactions between system and bath only, and how the system can be transformed via operations that do not involve work. We use this to derive fundamental limits to deterministic work extraction and expenditure. All the technical proofs are given in the corresponding appendices.

\section{General characterization of a finite bath}\label{sec:finite}

We consider a bath to be a large (but in this case not infinitely so) system with a density of states given by $\Omega (E,V)=e^{S(E,V)}$, where $S(E,V)$ is the entropy in the microcanonical ensemble for a given energy $E$ and volume $V$. We shall make three assumptions about it
\begin{enumerate}

	\item The entropy $S(E,V)$ is extensive: $S(kE,kV)=kS(E,V)$ for all $k>0$
	\item The dimensionless volume $V$ is large. 
	%This volume is a dimensionless quantity that needs to be understood as quantifying the amount of components of the bath, such as the number of particles or atoms. In realistic scenarios, such a quantity can always be defined. 
	\item The bath is in a Gibbs state with a given temperature $\beta$, such that a microstate of energy $E$ has probability $\frac 1 Z e^{-\beta E}$.
\end{enumerate}
We will be working in units for which the Boltzmann constant is $k_B=1$. Assumption $1$ implies that we can write the entropy $S(E,V)$ as
\begin{equation}
S(E,V)=V f(u)
\end{equation}
for some function $f(u)$ of the  energy density $u=E/V$. 
The probability distribution for $u$ is
then
\begin{equation}\label{eq:probu}
p(u) \propto e^{V[f(u)-\beta u]}
\ .
\end{equation}
In the large $V$ limit we can use the saddle point approximation \cite{butler2007saddlepoint}
\begin{equation}\label{eq:probu}
  p(u) \propto 
  e^{V \left[ f(u_\beta) +\frac 1 2 f''(u_\beta) (u-u_\beta)^2 \right]}
\ ,
\end{equation}
where $u_\beta$ is the absolute maximum of $f(u)-\beta u$ as a function of $u$, which implies
\begin{eqnarray}
  \label{f' id}
  f' (u_\beta) = \beta\ ,
  \\
  \label{sign f''}
  f'' (u_\beta)<0\ .
\end{eqnarray}
%Without loss of generality, and for the sake of simplicity, we set $u_0 =0$. 
In summary, we have a normal distribution
\begin{equation}
p(u) \propto e^{- \frac V 2 |f''(u_\beta)| (u-u_\beta)^2}
\ ,
\end{equation}
with mean $\langle u\rangle = u_\beta$ and variance $\langle (u -\langle u\rangle)^2 \rangle = |V f''(u_\beta)|^{-1}$.

Now, let us relate $f'' (u_\beta)$ to the heat capacity, defined as
\begin{equation}
  \label{c def}
  C = 
  V \frac{\text{d}\langle u \rangle}{\text{d}T} =
  -\frac V {T^2} \frac{\text{d}\langle u \rangle}{\text{d}\beta}
  \ .
\end{equation}
Differentiating~\eqref{f' id} with respect to $\beta$ gives $f'' (u_\beta) \frac {\text d u_\beta} {\text d \beta} =1$, and  substituting in~\eqref{c def} gives
\begin{equation}
  \label{c def 2}
  C = 
  -\frac V {T^2} \frac 1 {f'' (u_\beta)}
  \ .
\end{equation}
Note that~\eqref{sign f''} implies that the heat capacity is positive, as is always the case in ``ordinary matter".
Also note that $C\propto V$, because $f(u_\beta)$ is independent of $V$.
Which implies that the fluctuations of $u$ are 
\begin{equation}
  \left\langle (u -\langle u\rangle)^2 \right\rangle ^{1/2} 
  = 
  \frac {T \sqrt C} V
  \propto V^{-1/2}
  \ ,
\end{equation}
which are small when $V$ is large.

We are finally able to approximate the density of states of the bath as
\begin{align}\label{eq:densityapp}
  \Omega(E,V) \propto
  \text{exp}\! \left( {\beta E - \frac{\gamma E^2}{2}} \right) ,
\end{align}
where we have rescaled the energy such that $\langle E \rangle =0$, and we define $\gamma = \frac{1}{C T^2}$.

\section{Fluctuating work}\label{sec:TOfluc}

\subsection{Thermal operations with fluctuating work}

%We first characterize the type of thermodynamic processes that we consider, which we refer to as \emph{thermal operations with fluctuating work}. 
Next we introduce a widely use framework to describe thermodynamic transformations \cite{skrzypczyk2014work,masanes2014derivation,gemmer2015single}. 
Our setting consists of a system with Hamiltonian $H_{\rm S}$, the bath with Hamiltonian $H_{\rm B}$ initially in the thermal state (as describe in previous section), and an ideal weight with Hamiltonian $H_\mathrm{W}=  \int_\mathbb{R} dx\, x |x\rangle \! \langle x|$, where the orthonormal basis $\{|x\rangle, \forall\, x\in \mathbb R\}$ represents the position of the weight. Any joint transformation of system, bath and weight is represented by a Completely Positive Trace Preserving (CPTP) map $\Gamma_\mathrm{SBW}$ satisfying the following conditions:
\begin{description}
  \item[Microscopic reversibility (Second Law)] It has an (CPTP) inverse $\Gamma_\mathrm{SBW}^{-1}$, which implies unitarity $\Gamma_\mathrm{SBW} (\rho_\mathrm{SBW}) = U\rho_\mathrm{SBW} U^\dagger$.
    
  \item[Energy conservation (First Law)] 
  $[U,H_\mathrm{S} +H_\mathrm{B} +H_\mathrm{W}] =0$. 
  %\alv{This changes when the Hamiltonian of the system changes} {\color{blue} Jon: what does it change to?}

  \item[Independence from the ``position" of the weight]
  The unitary commutes with the translations on the weight $[U,\Delta_\mathrm{W}] = 0$. The generator of the translations $\Delta_\mathrm{W}$ is canonically conjugated to the position (or energy) of the weight $[H_\mathrm{W}, \Delta_\mathrm{W}] = \i$.

  \item[Classicality of work] Before and after applying the global map $\Gamma_\mathrm{SBW}$ the position of the weight is measured, obtaining outcomes $|x\rangle$ and $|x+W\rangle$ respectively. In general, the work $W$ is a fluctuating random variable.

\end{description}

Condition $[U,\Delta_\mathrm{W}] = 0$  implies that the reduced map on system and bath is a mixture of unitaries (Result 1 in \cite{masanes2014derivation}). Hence, these transformations can never decrease the entropy of system and bath, which guarantees that the weight is not used as a source of free energy.

Let us define the dephasing map as
\begin{equation}
  \Theta_\alpha [\rho_\mathrm{S}] = 
  \int_\mathbb{R}\! dt\, 
  \e^{i\alpha t}\,
  \e^{iH_\mathrm{S} t} \rho_\mathrm{S} \e^{-iH_\mathrm{S} t}
  \ .
\end{equation}
Energy conservation, the classicality of work and the fact that the initial state of the bath commutes with its Hamiltonian imply
\begin{equation}
  \Theta_\alpha \circ \Gamma_\mathrm{S}
  =
  \Gamma_\mathrm{S} \circ \Theta_\alpha 
  \ ,
\end{equation}
where $\Gamma_\mathrm{S} (\rho_\mathrm{S}) = {\rm tr}_\mathrm{BW} \Gamma_\mathrm{SBW} (\rho_\mathrm{S} \otimes \rho_\mathrm{B} \otimes \rho_\mathrm{W})$ is the transformation of the system.
Note the assumption that the initial state of system, bath and weight is uncorrelated.
See \cite{richens2016quantum} for a proof.
Setting $\alpha=0$ we have that, if the initial state of the system commutes with $H_\mathrm{S}$, then so does the final state. And, if the final state of the system commutes with $H_\mathrm{S}$, then so does the initial one. 
In this paper we only consider processes in which one of the two states (and hence both) is diagonal. 
For example, optimal work extraction from an arbitrary initial state is one such process. For processes where the initial and final states involve coherences, our results provide an upper bound to the work. See Appendix A1 for further details.

We write the initial and final states as $\rho_\mathrm{S} = \sum_{s} P(s) |s\rangle\! \langle s|$ and $\rho'_\mathrm{S} = \sum_{s'} P(s') |s'\rangle\! \langle s'|$, respectively. Where $|s\rangle$ and $|s'\rangle$ are the initial and final energy eigenstates. 
Note that we allow initial and final Hamiltonians $H_\mathrm{S}$ and $H'_\mathrm{S}$ to be not necessarily equal.

\begin{comment}
In the following we consider the case that the reduced map on system and bath is a classical (stochastic) map. As shown in \cite{masanes2014derivation}, this fully characterizes the case when the bath is a classical system and the system $\rho_\mathrm{S}$ is a classical system, or is a quantum system and the initial and / or final state of the system (under the action of the reduced map) is diagonal in the energy eigenbasis, given by the dephased states
\begin{eqnarray}
  \lim\limits_{T\rightarrow\infty} 
  \frac 1 T
  \int\limits_0^T dt\, e^{-\i H_\text{S} t} \rho\, 
  e^{\i H_\text{S} t}
  &=&
  \sum_s P(s) |s\rangle\! \langle s|
  \ ,
  \\
 \lim\limits_{T\rightarrow\infty} 
 \frac 1 T
 \int\limits_0^T dt\, e^{-\i H'_\text{S} t} \rho'\, 
  e^{\i H'_\text{S} t}
  &=&
  \sum_s P(s') |s'\rangle\! \langle s'|
  \ .
\end{eqnarray}

LLUis's stuff

We can characterize the stochastic map on the system and bath with the transition probabilities $t(E's'|Es)$, which give the probability of the joint system and bath energy level undergoing transition $(E , \epsilon_s)\rightarrow (E', \epsilon'_{s'})$ where $E$, $E'$ count over the bath energy levels and $\epsilon_s$, $\epsilon'_{s'}$ count over the system energy eigenstates. 
\end{comment}

\subsection{Corrections to the second law}

In this section we analyze the transformation power of the operations defined in the previous section; and study the effect of not having an infinite heat bath. 
First, we present a generalization to the second law~\eqref{eq:landauer} to the case of arbitrary heat bath.
This necessary and sufficient condition is not always useful, because it involves the final state of the bath. However, it precisely articulates the effect of having a finite bath. 
Later, we provide more practical bounds that are independent from the state of the bath.

\medskip
\noindent \textbf{Theorem 1:} \emph{A necessary and sufficient condition for the possibility of a transformation is}
\begin{align}
  \label{T3}
  \langle W\rangle  &\leq
-\Delta F - \frac{1}{\beta}
D[P(E's') \| p_G(E')P(s')]
\ ,
\end{align} 
\emph{where $P (s',E')$ is the joint probability distribution for the final state of the system $s'$ and the energy of the bath $E'$, $P (s')$ is the final state for the system, and $p_G(E')$ is the energy distribution for the Gibbs state.
We use the relative entropy $D[ p(x) \| q(y)]= \sum_{x,y} p(x)\log\! \left[p(x)/q(y) \right]$.}

\medskip
\noindent \emph{Proof.} See Appendix B1.

\medskip\noindent
Note that an equation similar to \eqref{T3} was previously found in  \cite{esposito2010entropy} and \cite{reeb2014improved}. When the heat capacity of the bath is infinite, it is possible to extract average work equal to the free energy difference and a system-bath final state of the form $P (s',E') = P (s') p_G(E')$ is achievable.
When this happens, the system and bath end up uncorrelated, and the bath remains in a thermal state.
The correction of~\eqref{T3} quantifies the distance between the real final state $P (s',E')$, and the ideal one $P (s') p_G (E')$.
Hence, the standard second law only applies when $P (s',E') = P (s') p_G(E')$. 
This idealized situation cannot be achieved with a finite bath (see Appendix B1). 
In finite baths of any size, all optimal work-extraction transformations leave the bath an athermal state correlated with the system.

The bound in Theorem 1 requires detailed knowledge of the final system and bath joint state which is not typically available in a realistic setting. 
We now present a necessary condition which only depends on the state of the system and the heat capacity of the bath $C$. 
This upper bound to the extractable work may not be tight in general, although below (Theorem~3) we explore some cases for which it is. 

\medskip\noindent \textbf{Theorem 2:} \emph{A necessary condition for the possibility of a transformation is}
\begin{equation}\label{eq:modfree2}
\langle W\rangle \leq -\Delta F - \frac{1}{2\beta C}\Delta S^2\ ,
\end{equation}
\emph{where $\Delta S$ is the change in entropy of the system.}\\

\noindent \emph{Proof.} See Appendix B2.
This provides a much tighter bound than the  macroscopic second law~\eqref{eq:landauer}.

\medskip\noindent
In what follows we present a necessary and sufficient condition that only depends on the initial and final states of the system, like~\eqref{eq:modfree2}. This independence from the state of the bath holds up to first order in $1/C$. Hence, it is valid in the regime of large (but not necessarily infinite) heat bath.
In addition, it requires that the initial or final state of the system is the uniform distribution (maximally-mixed state $P(s)=$ const).
For these cases we can refine the work upper bound in~\eqref{eq:modfree2} to a tight upper bound and show that it is achievable through operations that are independent of the bath energy (See Appendix B3).

\medskip\noindent \textbf{Theorem 3:} \emph{When the final state of the system is maximally mixed, the necessary and sufficient condition is}
\begin{equation}
\label{MWE}
\langle W\rangle \leq -\Delta F - \frac{1}{2\beta C}\Delta S^2\ ,
\end{equation}
\emph{up to first order in $1/C$.
When the initial state is maximally, the necessary and sufficient condition is}
\begin{eqnarray}
\label{WC}
\langle W\rangle \leq -\Delta F - \frac{1}{2\beta C}\left(\Delta S^2 +\mathsf{Var}[P(s')] \right)
\ ,
\end{eqnarray}
\emph{up to first order in $1/C$. The \emph{varentropy} is defined as % $\Delta S = \log d - S(\rho)$ and 
}
\begin{equation}
\label{def:varentropy}
\mathsf{Var}[P(s)] = \sum\limits_{s} P(s)\log^2\! P(s) - \left[\sum\limits_{s} P(s) \log P(s)\right]^2 ,
\end{equation}
\emph{and is always positive.}\\

\noindent \emph{Proof.} See Appendix B3.\\

In the rest of this section we discuss the case where the Hamiltonian of the system is trivial $H=0$. This implies that the thermal state at any temperature is the maximally-mixed one.
Hence, two applications of the above two results are: \emph{maximal work extractable} from a state~\eqref{MWE}, and the \emph{work cost of preparing a state} from equilibrium ones~\eqref{WC} (e.g. erasure).

The optimal procedures in terms of work are the ones which saturate the above inequalities.
We see that the minimal work cost for preparing an arbitrary state $P(s')$ minus the maximal work extractable from it is
\begin{equation}
  \frac{1}{2\beta C}
  \left(\Delta S^2 +\mathsf{Var}[P(s')] \right)\ ,
\end{equation}
which is always positive.
Hence, this circular process, despite being optimal, is not reversible.
Therefore, we conclude that
\begin{quote}\em
Thermodynamic reversibility requires an infinite heat bath.
\end{quote}

%We now consider a much studied class of thermal transformations for which we can derive necessary and sufficient conditions, or equivalently, tight bounds for the work.
%These bounds are obtained in the limit of large but not necessarily infinite $C$ \added{[Is this correct?]}. These transformations are: 
%\begin{itemize}
%\item[1)] Work extraction. An arbitrary initial state and Hamiltonian $\rho, H_\mathrm{S}$ is transformed to the thermal state $\rho_\mathrm{S}' = \frac 1 d\, \mathbb I$ of the trivial Hamiltonian $H'_\mathrm{S} = 0$. This includes the case where all possible work is extracted, including the internal energy of the system. 

%\item[2)] Landauer erasure. A maximally mixed state with trivial Hamiltonian undergoes a transformation taking it to an arbitrary state $\rho_\mathrm{S}'$ with arbitrary Hamiltonian $H'_\mathrm{S}$. This includes the case where an unknown state of a system with trivial Hamiltonian is reset to a pure state. 
%\end{itemize}
%In both cases, we allow for arbitrary initial and final Hamiltonians. These operations are studied widely in the literature, especially the Landauer erasure protocol  \cite{landauer1961irreversibility,bennett1973logical,plenio2001physics,goold2015nonequilibrium}. 

Indeed this is implied for all processes that change the entropy of the system $\Delta S\neq 0$ by \eqref{eq:modfree2}. The varentropy~\eqref{def:varentropy} arises in second-order Shannon information theory. And it can be interpreted as giving the variance of the ``surprise'' or ``fine-grained entropy'' $-\log P(s)$, whereas the Shannon entropy gives the average of this. 
For thermal states $P(s) \propto \e^{-\beta \mathcal E_s}$, the varentropy is the heat capacity~\eqref{c def} of the system (see section 2.2.2 of \cite{reeb2015tight}). For example, consider the case where we prepare a thermal state of an arbitrary Hamiltonian $H'_S$ from the thermal state of a trivial Hamiltonian. The term $\mathsf{Var}[P(s')]$ gives the heat capacity of the prepared system and the correction includes the ratio of this to the heat capacity of the bath.

\section{Deterministic work}\label{sec:detTO}

For some applications it is preferable that the work extracted from a system does not fluctuate.
%The definition of work that has been used in the previous section, given by the restriction on the operations that are possible, has a potentially undesirable feature, which is that it suffers from random fluctuations, which may in general be unbounded 
This has led some authors to consider a more restrictive definition of work for the quantum regime, namely \emph{deterministic} work, which consists of the raising or lowering of a system from one energy level to another with a very high probability \cite{horodecki2013fundamental,aaberg2013truly,gemmer2015single}. We now explore in which way can deterministic work appear when one does not have an infinite bath.

We shall focus on joint transformations of the system and weight such as
\begin{equation}\label{eq:trans}
\rho \otimes \ket{0}\!\bra{0} \rightarrow \sigma \otimes \ket{W}\!\bra{W},
\end{equation}
where $\ket{0}\! \bra{0}$ and $\ket{W}\! \bra{W}$ are energy eigenstates of the weight with definite energies $0,W$. The figure of merit here is the maximum $W$ that can be achieved through the operations defined in Section~\ref{sec:TOfluc} for each particular case. Due to the lack of work fluctuations, we can now effectively take the weight to be an additional part of our system.

The joint system-bath-weight has a well-defined total energy $E_{\rm tot}$ with probability distribution $P(E_ {\rm tot})$. 
Because the joint transformation conserves total energy, a transition is possible in general if it is also possible separately in each subspace of fixed $E_{\rm tot}$ (as the unitary acts separately and independently on each subspace). This is different to the previous secion, where we were able to consider optimal averages over all total energies. In Appendix \ref{app:det} we describe a criteria for transitions to be possible for each such subspace, which reduces to thermomajorization (the full criteria when the bath is infinite) at the average total energy $\langle E_{\rm tot} \rangle$. Which, without loss of generality we set it to zero $\langle E_{\rm tot} \rangle =0$.
Because of the dependence on the total energy, and unlike in the infinite-bath case, we find that conclusive answers as to which transitions are possible cannot be given in terms of thermo-majorization. 

However, we are able to obtain a nontrivial answer for particular processes if we allow for a small probability of error that comes from ignoring the tails of the distribution $P(E_{\rm tot})$, such that we only consider a finite energy range around the average. Ignoring events with small probability is a standard practice single-shot information theory~\cite{renner2008security,tomamichel2015quantum}, although conceptually it is an additional complication with respect to the infinite-bath regime.

In Appendix~\ref{app:dist} we show that for an energy range $E_{\rm tot} \in [ -E^* , E^*]$, the probability of failure $\epsilon$ is approximately given by
\begin{equation}
\label{eq:eps1}
  \epsilon \simeq 
  \frac{2^{3/2}}
  {\sqrt{\pi \gamma} E^*} 
  e^{-\frac{\gamma}{2} {E^*}^2},
\end{equation}
where to derive this we assume that the energy fluctuations of the bath are much larger than those of the system. That is $\gamma^{-1/2} \gg \|H_S \|_\infty$, where the operator norm gives the largest eigenvalue in absolute value.

%the total energy distribution is very close to that of the bath alone. This is accurate when the energy fluctuations of the bath are large comparable to the energy scales of the system, , and also when $\epsilon$ is small. 

Even with this restriction we cannot give conclusive answers to general transitions, but it is possible if one takes either the initial or the final state to be thermal. In this case, the criteria simplifies, as we only have to look at the extremal points of the energy distribution (this is shown in Appendix \ref{app:detwork}). Hence, given that we allow for a probability of failure, one can compute the maximum work that one can extract in the transition that takes a state $\rho = \sum_s P(s) \ketbra{s}{s}$ to the thermal state, as well as the minimum work needed in the opposite transition, when creating a state $\rho = \sum_s P(s) \ketbra{s}{s}$ from a thermal state. 
We denote the Hamiltonian of the system by $H = \sum_s \mathcal E_s \ketbra{s}{s}$, the thermal state by $\tau_\beta = \frac 1 {Z_\beta} e^{-\beta H_{\rm S}}$, and the partition function by $Z_{\beta}= \sum_s e^{-\beta \mathcal E_s}$.
We hence find the following two results (the details can be found in Appendix \ref{app:detwork}) :

\medskip
\noindent \textbf{Theorem 4:} \emph{(Work extraction)}\label{result1}
\emph{
%Given a state $\rho=\sum_s P(s) \ketbra{s}{s}$ with Hamiltonian $H_{\rm S}$, thermal state $\tau$ with temperature $\beta$, and a probability of error $\epsilon$, 
The maximal deterministic work that one can extracted from sate $P(s)$}
%\begin{equation}\label{eq:trans}
%\rho \otimes \ket{0}\!\bra{0} \rightarrow \tau_\beta \otimes \ket{W}\!\bra{W},
%\end{equation}
\emph{is, up to error $\epsilon$, given by $W_{\text{ext}}^\epsilon= F^{\beta_-}_{\min}(\rho)$, where $\beta_- = \beta-\gamma E^*$ and}
\begin{equation}  
  F_{\min}^{\beta} (\rho) =
  \frac 1 {\beta}
  \log Z_{\beta} 
  -  \frac 1 {\beta}
  \log\!\left(
  \sum_s e^{-{\beta} \mathcal E_{s}} P(s)^0 \right)
 ,
\end{equation}
\emph{and the relation between $\epsilon$ and $E^*$ is given by Eq.~\eqref{eq:eps1}}.

\bigskip\noindent\emph{Proof.}
See Appendix C2. Note that we make us of the algebraic identities $x^0 = 1$ if $x>0$ and $x^0 = 0$ if $x=0$.

\bigskip\noindent \textbf{Theorem 5:} \emph{(Work of formation)
%Define $\rho$, $\tau$, $\beta$, $\epsilon$, $E^*$ and $Z_{\beta'}$ as in Result \ref{result1}. 
In the transition }
\begin{equation}\label{eq:trans2}
\tau_\beta \otimes \ket{W}\bra{W} \rightarrow \rho \otimes \ket{0}\bra{0}\ ,
\end{equation}
\emph{the minimum possible value of $W$ is given by} 

\begin{equation}
  W_{\text{for}}^\epsilon =\frac{\beta}{\beta_+}
  F_{\max}^\beta(\rho)- 
  \frac{1}{\beta_+}
  \log\! \left(
  \frac{Z_{\beta}}{Z_{\beta_+}}
  \sum_{s}  P(s) e^{(\beta -\beta_+) \mathcal E_{s}}
  \right)
\end{equation}
\emph{where $\beta_+=\beta+\gamma E^*$ and}
\begin{equation}
  F_{\max}^{\beta}(\rho) =
  \frac 1 \beta
  \log\max_s P(s) e^{{\beta} \mathcal E_{s}} \ .
\end{equation}

\bigskip\noindent\emph{Proof.}
See Appendix C2.

We see that both quantities converge to the results of \cite{horodecki2013fundamental} in the infinite limit where $\gamma \rightarrow 0$, where one has $F_{\min}^{\beta}(\rho)$ and $F_{\max}^{\beta}(\rho)$ as the extractable and the work of formation. 

The fact that in this case we need to allow for a probability of error is a consequence of the 3rd law \cite{masanes2014derivation,alhambra2016second}, as in general only with an infinite bath (with degrees of freedom that require infinite time to be reached) can perfectly deterministic work be extracted or expended. We note, however, the $\epsilon$ that appears here is not the same as that of the \emph{smoothed} version of the $F_{\min}^{\beta}(\rho)$ and $F_{\max}^{\beta}(\rho)$ free energies \cite{horodecki2013fundamental,renner2008security}. There, the small error probability does not come from cutting off the distribution of energies of the bath, but from optimizing over an $\epsilon'$-sized ball in the space of states, such as some $\rho_{\epsilon'}$ for which $|| \rho_{\epsilon'}-\rho ||_1 \le \epsilon'$. Hence, in our expressions, we can also implement this further smoothing too, such that they depend on both $\epsilon$ from Eq.\eqref{eq:eps} and $\epsilon'$ from the smoothing of the state. The optimal values of the work will then be 
\begin{equation}
W_{\epsilon,\epsilon'}= \frac{1}{\beta_-} \sup_{\rho_{\epsilon'}}  F_{\text{min}}^{\beta_-}(\rho_{\epsilon'}) 
\end{equation}
for the extractable one and
\begin{align*}
W_{\epsilon,\epsilon'}\!&=\!\frac{1}{\beta_+}\log{\frac{Z_{\beta_+}}{Z_\beta}}\\
&+\frac{1}{\beta_+}\inf_{\rho_{\epsilon'}} \Big[ F_{\text{max}}^{\beta}(\rho_{\epsilon'})+\log{\frac{1}{\sum_{s} P_{\epsilon'} (s) e^{(\beta-\beta_+) \mathcal E_{s}}}}\Big] \numberthis 
\end{align*}
for the work of formation. We here define $P_{\epsilon'}$ as the probability spectrum of the state $\rho_{\epsilon'}$, and recall that $\beta_{\pm}=\beta \pm \gamma E^*$.

In Fig. \ref{fig:EpsW} of Appendix \ref{app:detwork} we show an example of the tradeoff between these works and the probability of failure $\epsilon$ allowed.

\section{Conclusion}

In this paper we have derived the finite-bath corrections to the work that can be extracted or expended in a thermodynamical transition, in the cases where the work is taken as a fluctuating quantity and when it is taken as a definite value. Our approach is general in the sense that we do not need to consider the particular microscopic structure of the heat bath (e.g. whether it is made of fermions, bosons,...). The only quantities that play a role are its temperature and its heat capacity. When the heat capacity diverges, i.e. the bath becomes infinite, one recovers all the standard results, such as the result in \cite{skrzypczyk2014work,aaberg2013truly} for fluctuating and deterministic work.

\begin{comment}
Under the assumption that the bath is infinite, all processes achieving a given transformation of the working substance can yield equal work. By removing this assumption we find that this is no longer the case; an optimal process or set of processes in exist in general. Thus our result allows one to determine the most efficient processes with which to carry out a thermodynamic protocol in the more realistic setting of a finite bath. This will allow for improved protocols in experimental setups, for example for single qubit heat engines \cite{linden2010small}.
\end{comment}
Previous work on finite-size limitations includes \cite{reeb2014improved,jakvsic2014note,pekola2016finite,woods2015maximum}. For example in \cite{reeb2014improved} tight corrections to the Landauer bound are found in terms of the dimension of the bath, recovering the Landauer limit $\langle W \rangle \leq \Delta S$ when the heat bath is infinite dimensional. This work concerns bounding the size of the bath, by which we mean its volume, and for this the dimension is not a relevant quantity. On the other hand, the heat capacity of the bath is proportional to its volume. To illustrate this difference, there are situations where the bath is infinite dimensional but with a finite volume and heat capacity (for example a box of air with finite volume, a bosonic bath, and so on...). In these cases bounding dimension results in trivial corrections to the free energy whereas the heat capacity provides non-trivial corrections.  Furthermore, the fact that the bath has infinite dimensions is also a necessary condition for the appearance of an infinite recurrence time, which is needed for the emergence of irreversibility \cite{bocchieri1957quantum}. Working in the regime of finite but large environment, we find corrections to the second law that are universal, in that they apply to all concievable environments, regardelss of their constituents, Hamiltonian and Hilbert space dimension. 

One avenue to explore is how our results limit the efficiency of heat engines in finite time. We have observed that the effect of having access to a finite bath has a marked effect on the minimal achievable dissipation in some protocols, suggesting that there could be rich and unexplored finite size effects in small scale thermal engines. A correction to the free energy should also correspond to a correction of the Carnot efficiency of ideal cyclic process. Previous work such as \cite{woods2015maximum,skrzypczyk2011smallest,uzdin2015equivalence} has looked at this question for particular models. An open question is to determine the work-optimal processes for arbitrary state transformations with a finite bath. Whereas for an infinite bath the work is a function of the state, and as a result the work is ``path independent'', for finite baths our results suggest the existence of unique optimal processes that warrant further study.

In our setting we do not consider the possibility of coherence. Given the energy conservation restriction imposed it is known that this means no work can be extracted from states with coherence. It would be interesting to see how the presence of coherence provides additional constraints to finite-bath work extraction. Some results for deterministic work can be found in \cite{korzekwa2016extraction}, and the impact of coherence in heat engines is examined in \cite{uzdin2015equivalence,mitchison2015coherence,gardas2015thermodynamic}.

Finally, we have seen that in the case of Landauer erasure and state formation, second order information measures are required to compute tight upper bounds on work. It is often stated that thermodynamics has deep roots in information theory, and these results suggest that in order to move away from asymptotic approximations in thermodynamics we must use information measures that take these non-asymptotic effects into account. In order to obtain more directly applicable corrections, one may have to look at particular protocols and particular models of the bath. As one considers increasingly more terms in the expansion of the density of states Eq. \ref{eq:probu} in order to get a more accurate result, an increasingly more detailed knowledge of the microscopic features of the bath is required.

\bigskip\noindent
{\bf Acknowledgements.}
LM and JR are funded by EPSRC. AMA acknowledges support from the FQXi. The authors would like to thank David Reeb for useful discussions.

\bibliographystyle{apsrev4-1}
\bibliography{References}

%merlin.mbs apsrev4-1.bst 2010-07-25 4.21a (PWD, AO, DPC) hacked
%Control: key (0)
%Control: author (72) initials jnrlst
%Control: editor formatted (1) identically to author
%Control: production of article title (-1) disabled
%Control: page (0) single
%Control: year (1) truncated
%Control: production of eprint (0) enabled
\begin{thebibliography}{34}%
\makeatletter
\providecommand \@ifxundefined [1]{%
 \@ifx{#1\undefined}
}%
\providecommand \@ifnum [1]{%
 \ifnum #1\expandafter \@firstoftwo
 \else \expandafter \@secondoftwo
 \fi
}%
\providecommand \@ifx [1]{%
 \ifx #1\expandafter \@firstoftwo
 \else \expandafter \@secondoftwo
 \fi
}%
\providecommand \natexlab [1]{#1}%
\providecommand \enquote  [1]{``#1''}%
\providecommand \bibnamefont  [1]{#1}%
\providecommand \bibfnamefont [1]{#1}%
\providecommand \citenamefont [1]{#1}%
\providecommand \href@noop [0]{\@secondoftwo}%
\providecommand \href [0]{\begingroup \@sanitize@url \@href}%
\providecommand \@href[1]{\@@startlink{#1}\@@href}%
\providecommand \@@href[1]{\endgroup#1\@@endlink}%
\providecommand \@sanitize@url [0]{\catcode `\\12\catcode `\$12\catcode
  `\&12\catcode `\#12\catcode `\^12\catcode `\_12\catcode `\%12\relax}%
\providecommand \@@startlink[1]{}%
\providecommand \@@endlink[0]{}%
\providecommand \url  [0]{\begingroup\@sanitize@url \@url }%
\providecommand \@url [1]{\endgroup\@href {#1}{\urlprefix }}%
\providecommand \urlprefix  [0]{URL }%
\providecommand \Eprint [0]{\href }%
\providecommand \doibase [0]{http://dx.doi.org/}%
\providecommand \selectlanguage [0]{\@gobble}%
\providecommand \bibinfo  [0]{\@secondoftwo}%
\providecommand \bibfield  [0]{\@secondoftwo}%
\providecommand \translation [1]{[#1]}%
\providecommand \BibitemOpen [0]{}%
\providecommand \bibitemStop [0]{}%
\providecommand \bibitemNoStop [0]{.\EOS\space}%
\providecommand \EOS [0]{\spacefactor3000\relax}%
\providecommand \BibitemShut  [1]{\csname bibitem#1\endcsname}%
\let\auto@bib@innerbib\@empty
%</preamble>
\bibitem [{\citenamefont {Blickle}\ and\ \citenamefont
  {Bechinger}(2012)}]{blickle2012realization}%
  \BibitemOpen
  \bibfield  {author} {\bibinfo {author} {\bibfnamefont {V.}~\bibnamefont
  {Blickle}}\ and\ \bibinfo {author} {\bibfnamefont {C.}~\bibnamefont
  {Bechinger}},\ }\href@noop {} {\bibfield  {journal} {\bibinfo  {journal}
  {Nature Physics}\ }\textbf {\bibinfo {volume} {8}},\ \bibinfo {pages} {143}
  (\bibinfo {year} {2012})}\BibitemShut {NoStop}%
\bibitem [{\citenamefont {Trotzky}\ \emph {et~al.}(2012)\citenamefont
  {Trotzky}, \citenamefont {Chen}, \citenamefont {Flesch}, \citenamefont
  {McCulloch}, \citenamefont {Schollw{\"o}ck}, \citenamefont {Eisert},\ and\
  \citenamefont {Bloch}}]{trotzky2012probing}%
  \BibitemOpen
  \bibfield  {author} {\bibinfo {author} {\bibfnamefont {S.}~\bibnamefont
  {Trotzky}}, \bibinfo {author} {\bibfnamefont {Y.-A.}\ \bibnamefont {Chen}},
  \bibinfo {author} {\bibfnamefont {A.}~\bibnamefont {Flesch}}, \bibinfo
  {author} {\bibfnamefont {I.~P.}\ \bibnamefont {McCulloch}}, \bibinfo {author}
  {\bibfnamefont {U.}~\bibnamefont {Schollw{\"o}ck}}, \bibinfo {author}
  {\bibfnamefont {J.}~\bibnamefont {Eisert}}, \ and\ \bibinfo {author}
  {\bibfnamefont {I.}~\bibnamefont {Bloch}},\ }\href@noop {} {\bibfield
  {journal} {\bibinfo  {journal} {Nature Physics}\ }\textbf {\bibinfo {volume}
  {8}},\ \bibinfo {pages} {325} (\bibinfo {year} {2012})}\BibitemShut {NoStop}%
\bibitem [{\citenamefont {Abah}\ \emph {et~al.}(2012)\citenamefont {Abah},
  \citenamefont {Rossnagel}, \citenamefont {Jacob}, \citenamefont {Deffner},
  \citenamefont {Schmidt-Kaler}, \citenamefont {Singer},\ and\ \citenamefont
  {Lutz}}]{abah2012single}%
  \BibitemOpen
  \bibfield  {author} {\bibinfo {author} {\bibfnamefont {O.}~\bibnamefont
  {Abah}}, \bibinfo {author} {\bibfnamefont {J.}~\bibnamefont {Rossnagel}},
  \bibinfo {author} {\bibfnamefont {G.}~\bibnamefont {Jacob}}, \bibinfo
  {author} {\bibfnamefont {S.}~\bibnamefont {Deffner}}, \bibinfo {author}
  {\bibfnamefont {F.}~\bibnamefont {Schmidt-Kaler}}, \bibinfo {author}
  {\bibfnamefont {K.}~\bibnamefont {Singer}}, \ and\ \bibinfo {author}
  {\bibfnamefont {E.}~\bibnamefont {Lutz}},\ }\href@noop {} {\bibfield
  {journal} {\bibinfo  {journal} {Physical review letters}\ }\textbf {\bibinfo
  {volume} {109}},\ \bibinfo {pages} {203006} (\bibinfo {year}
  {2012})}\BibitemShut {NoStop}%
\bibitem [{\citenamefont {Koski}\ \emph {et~al.}(2014)\citenamefont {Koski},
  \citenamefont {Maisi}, \citenamefont {Sagawa},\ and\ \citenamefont
  {Pekola}}]{koski2014experimental}%
  \BibitemOpen
  \bibfield  {author} {\bibinfo {author} {\bibfnamefont {J.~V.}\ \bibnamefont
  {Koski}}, \bibinfo {author} {\bibfnamefont {V.~F.}\ \bibnamefont {Maisi}},
  \bibinfo {author} {\bibfnamefont {T.}~\bibnamefont {Sagawa}}, \ and\ \bibinfo
  {author} {\bibfnamefont {J.~P.}\ \bibnamefont {Pekola}},\ }\href {\doibase
  10.1103/PhysRevLett.113.030601} {\bibfield  {journal} {\bibinfo  {journal}
  {Phys. Rev. Lett.}\ }\textbf {\bibinfo {volume} {113}},\ \bibinfo {pages}
  {030601} (\bibinfo {year} {2014})}\BibitemShut {NoStop}%
\bibitem [{\citenamefont {Pekola}(2015)}]{pekola2015towards}%
  \BibitemOpen
  \bibfield  {author} {\bibinfo {author} {\bibfnamefont {J.~P.}\ \bibnamefont
  {Pekola}},\ }\href@noop {} {\bibfield  {journal} {\bibinfo  {journal} {Nature
  Physics}\ }\textbf {\bibinfo {volume} {11}},\ \bibinfo {pages} {118}
  (\bibinfo {year} {2015})}\BibitemShut {NoStop}%
\bibitem [{\citenamefont {Goold}\ \emph {et~al.}(2016)\citenamefont {Goold},
  \citenamefont {Huber}, \citenamefont {Riera}, \citenamefont {del Rio},\ and\
  \citenamefont {Skrzypczyk}}]{goold2016role}%
  \BibitemOpen
  \bibfield  {author} {\bibinfo {author} {\bibfnamefont {J.}~\bibnamefont
  {Goold}}, \bibinfo {author} {\bibfnamefont {M.}~\bibnamefont {Huber}},
  \bibinfo {author} {\bibfnamefont {A.}~\bibnamefont {Riera}}, \bibinfo
  {author} {\bibfnamefont {L.}~\bibnamefont {del Rio}}, \ and\ \bibinfo
  {author} {\bibfnamefont {P.}~\bibnamefont {Skrzypczyk}},\ }\href
  {http://stacks.iop.org/1751-8121/49/i=14/a=143001} {\bibfield  {journal}
  {\bibinfo  {journal} {Journal of Physics A: Mathematical and Theoretical}\
  }\textbf {\bibinfo {volume} {49}},\ \bibinfo {pages} {143001} (\bibinfo
  {year} {2016})}\BibitemShut {NoStop}%
\bibitem [{\citenamefont {Vinjanampathy}\ and\ \citenamefont
  {Anders}(2015)}]{vinjanampathy2015quantum}%
  \BibitemOpen
  \bibfield  {author} {\bibinfo {author} {\bibfnamefont {S.}~\bibnamefont
  {Vinjanampathy}}\ and\ \bibinfo {author} {\bibfnamefont {J.}~\bibnamefont
  {Anders}},\ }\href@noop {} {\bibfield  {journal} {\bibinfo  {journal} {arXiv
  preprint arXiv:1508.06099}\ } (\bibinfo {year} {2015})}\BibitemShut {NoStop}%
\bibitem [{\citenamefont {Richens}\ and\ \citenamefont
  {Masanes}(2016)}]{richens2016quantum}%
  \BibitemOpen
  \bibfield  {author} {\bibinfo {author} {\bibfnamefont {J.~G.}\ \bibnamefont
  {Richens}}\ and\ \bibinfo {author} {\bibfnamefont {L.}~\bibnamefont
  {Masanes}},\ }\href@noop {} {\bibfield  {journal} {\bibinfo  {journal} {arXiv
  preprint arXiv:1603.02417}\ } (\bibinfo {year} {2016})}\BibitemShut {NoStop}%
\bibitem [{\citenamefont {Alhambra}\ \emph {et~al.}(2016)\citenamefont
  {Alhambra}, \citenamefont {Masanes}, \citenamefont {Oppenheim},\ and\
  \citenamefont {Perry}}]{alhambra2016second}%
  \BibitemOpen
  \bibfield  {author} {\bibinfo {author} {\bibfnamefont {{\'A}.~M.}\
  \bibnamefont {Alhambra}}, \bibinfo {author} {\bibfnamefont {L.}~\bibnamefont
  {Masanes}}, \bibinfo {author} {\bibfnamefont {J.}~\bibnamefont {Oppenheim}},
  \ and\ \bibinfo {author} {\bibfnamefont {C.}~\bibnamefont {Perry}},\
  }\href@noop {} {\bibfield  {journal} {\bibinfo  {journal} {arXiv preprint
  arXiv:1601.05799}\ } (\bibinfo {year} {2016})}\BibitemShut {NoStop}%
\bibitem [{\citenamefont {{\AA}berg}(2013)}]{aaberg2013truly}%
  \BibitemOpen
  \bibfield  {author} {\bibinfo {author} {\bibfnamefont {J.}~\bibnamefont
  {{\AA}berg}},\ }\href@noop {} {\bibfield  {journal} {\bibinfo  {journal}
  {Nature communications}\ }\textbf {\bibinfo {volume} {4}} (\bibinfo {year}
  {2013})}\BibitemShut {NoStop}%
\bibitem [{\citenamefont {Horodecki}\ and\ \citenamefont
  {Oppenheim}(2013)}]{horodecki2013fundamental}%
  \BibitemOpen
  \bibfield  {author} {\bibinfo {author} {\bibfnamefont {M.}~\bibnamefont
  {Horodecki}}\ and\ \bibinfo {author} {\bibfnamefont {J.}~\bibnamefont
  {Oppenheim}},\ }\href@noop {} {\bibfield  {journal} {\bibinfo  {journal}
  {Nature communications}\ }\textbf {\bibinfo {volume} {4}} (\bibinfo {year}
  {2013})}\BibitemShut {NoStop}%
\bibitem [{\citenamefont {Woods}\ \emph {et~al.}(2015)\citenamefont {Woods},
  \citenamefont {Ng},\ and\ \citenamefont {Wehner}}]{woods2015maximum}%
  \BibitemOpen
  \bibfield  {author} {\bibinfo {author} {\bibfnamefont {M.~P.}\ \bibnamefont
  {Woods}}, \bibinfo {author} {\bibfnamefont {N.}~\bibnamefont {Ng}}, \ and\
  \bibinfo {author} {\bibfnamefont {S.}~\bibnamefont {Wehner}},\ }\href@noop {}
  {\bibfield  {journal} {\bibinfo  {journal} {arXiv preprint arXiv:1506.02322}\
  } (\bibinfo {year} {2015})}\BibitemShut {NoStop}%
\bibitem [{\citenamefont {Reeb}\ and\ \citenamefont
  {Wolf}(2014)}]{reeb2014improved}%
  \BibitemOpen
  \bibfield  {author} {\bibinfo {author} {\bibfnamefont {D.}~\bibnamefont
  {Reeb}}\ and\ \bibinfo {author} {\bibfnamefont {M.~M.}\ \bibnamefont
  {Wolf}},\ }\href@noop {} {\bibfield  {journal} {\bibinfo  {journal} {New
  Journal of Physics}\ }\textbf {\bibinfo {volume} {16}},\ \bibinfo {pages}
  {103011} (\bibinfo {year} {2014})}\BibitemShut {NoStop}%
\bibitem [{\citenamefont {Campisi}\ \emph {et~al.}(2009)\citenamefont
  {Campisi}, \citenamefont {Talkner},\ and\ \citenamefont
  {H{\"a}nggi}}]{campisi2009finite}%
  \BibitemOpen
  \bibfield  {author} {\bibinfo {author} {\bibfnamefont {M.}~\bibnamefont
  {Campisi}}, \bibinfo {author} {\bibfnamefont {P.}~\bibnamefont {Talkner}}, \
  and\ \bibinfo {author} {\bibfnamefont {P.}~\bibnamefont {H{\"a}nggi}},\
  }\href@noop {} {\bibfield  {journal} {\bibinfo  {journal} {Physical Review
  E}\ }\textbf {\bibinfo {volume} {80}},\ \bibinfo {pages} {031145} (\bibinfo
  {year} {2009})}\BibitemShut {NoStop}%
\bibitem [{\citenamefont {Scharlau}\ and\ \citenamefont
  {Mueller}(2016)}]{scharlau2016quantum}%
  \BibitemOpen
  \bibfield  {author} {\bibinfo {author} {\bibfnamefont {J.}~\bibnamefont
  {Scharlau}}\ and\ \bibinfo {author} {\bibfnamefont {M.~P.}\ \bibnamefont
  {Mueller}},\ }\href@noop {} {\bibfield  {journal} {\bibinfo  {journal} {arXiv
  preprint arXiv:1605.06092}\ } (\bibinfo {year} {2016})}\BibitemShut {NoStop}%
\bibitem [{\citenamefont {Butler}(2007)}]{butler2007saddlepoint}%
  \BibitemOpen
  \bibfield  {author} {\bibinfo {author} {\bibfnamefont {R.~W.}\ \bibnamefont
  {Butler}},\ }\href@noop {} {\emph {\bibinfo {title} {Saddlepoint
  approximations with applications}}},\ Vol.~\bibinfo {volume} {22}\ (\bibinfo
  {publisher} {Cambridge University Press},\ \bibinfo {year}
  {2007})\BibitemShut {NoStop}%
\bibitem [{\citenamefont {Skrzypczyk}\ \emph {et~al.}(2014)\citenamefont
  {Skrzypczyk}, \citenamefont {Short},\ and\ \citenamefont
  {Popescu}}]{skrzypczyk2014work}%
  \BibitemOpen
  \bibfield  {author} {\bibinfo {author} {\bibfnamefont {P.}~\bibnamefont
  {Skrzypczyk}}, \bibinfo {author} {\bibfnamefont {A.~J.}\ \bibnamefont
  {Short}}, \ and\ \bibinfo {author} {\bibfnamefont {S.}~\bibnamefont
  {Popescu}},\ }\href@noop {} {\bibfield  {journal} {\bibinfo  {journal}
  {Nature communications}\ }\textbf {\bibinfo {volume} {5}} (\bibinfo {year}
  {2014})}\BibitemShut {NoStop}%
\bibitem [{\citenamefont {Masanes}\ and\ \citenamefont
  {Oppenheim}(2014)}]{masanes2014derivation}%
  \BibitemOpen
  \bibfield  {author} {\bibinfo {author} {\bibfnamefont {L.}~\bibnamefont
  {Masanes}}\ and\ \bibinfo {author} {\bibfnamefont {J.}~\bibnamefont
  {Oppenheim}},\ }\href@noop {} {\bibfield  {journal} {\bibinfo  {journal}
  {arXiv preprint arXiv:1412.3828}\ } (\bibinfo {year} {2014})}\BibitemShut
  {NoStop}%
\bibitem [{\citenamefont {Gemmer}\ and\ \citenamefont
  {Anders}(2015)}]{gemmer2015single}%
  \BibitemOpen
  \bibfield  {author} {\bibinfo {author} {\bibfnamefont {J.}~\bibnamefont
  {Gemmer}}\ and\ \bibinfo {author} {\bibfnamefont {J.}~\bibnamefont
  {Anders}},\ }\href@noop {} {\bibfield  {journal} {\bibinfo  {journal} {New
  Journal of Physics}\ }\textbf {\bibinfo {volume} {17}},\ \bibinfo {pages}
  {085006} (\bibinfo {year} {2015})}\BibitemShut {NoStop}%
\bibitem [{\citenamefont {Esposito}\ \emph {et~al.}(2010)\citenamefont
  {Esposito}, \citenamefont {Lindenberg},\ and\ \citenamefont {Van~den
  Broeck}}]{esposito2010entropy}%
  \BibitemOpen
  \bibfield  {author} {\bibinfo {author} {\bibfnamefont {M.}~\bibnamefont
  {Esposito}}, \bibinfo {author} {\bibfnamefont {K.}~\bibnamefont
  {Lindenberg}}, \ and\ \bibinfo {author} {\bibfnamefont {C.}~\bibnamefont
  {Van~den Broeck}},\ }\href@noop {} {\bibfield  {journal} {\bibinfo  {journal}
  {New Journal of Physics}\ }\textbf {\bibinfo {volume} {12}},\ \bibinfo
  {pages} {013013} (\bibinfo {year} {2010})}\BibitemShut {NoStop}%
\bibitem [{\citenamefont {Reeb}\ and\ \citenamefont
  {Wolf}(2015)}]{reeb2015tight}%
  \BibitemOpen
  \bibfield  {author} {\bibinfo {author} {\bibfnamefont {D.}~\bibnamefont
  {Reeb}}\ and\ \bibinfo {author} {\bibfnamefont {M.~M.}\ \bibnamefont
  {Wolf}},\ }\href@noop {} {\bibfield  {journal} {\bibinfo  {journal} {IEEE
  Transactions on Information Theory}\ }\textbf {\bibinfo {volume} {61}},\
  \bibinfo {pages} {1458} (\bibinfo {year} {2015})}\BibitemShut {NoStop}%
\bibitem [{\citenamefont {Renner}(2008)}]{renner2008security}%
  \BibitemOpen
  \bibfield  {author} {\bibinfo {author} {\bibfnamefont {R.}~\bibnamefont
  {Renner}},\ }\href@noop {} {\bibfield  {journal} {\bibinfo  {journal}
  {International Journal of Quantum Information}\ }\textbf {\bibinfo {volume}
  {6}},\ \bibinfo {pages} {1} (\bibinfo {year} {2008})}\BibitemShut {NoStop}%
\bibitem [{\citenamefont {Tomamichel}(2015)}]{tomamichel2015quantum}%
  \BibitemOpen
  \bibfield  {author} {\bibinfo {author} {\bibfnamefont {M.}~\bibnamefont
  {Tomamichel}},\ }\href@noop {} {\bibfield  {journal} {\bibinfo  {journal}
  {arXiv preprint arXiv:1504.00233}\ } (\bibinfo {year} {2015})}\BibitemShut
  {NoStop}%
\bibitem [{\citenamefont {Jak{\v{s}}i{\'c}}\ and\ \citenamefont
  {Pillet}(2014)}]{jakvsic2014note}%
  \BibitemOpen
  \bibfield  {author} {\bibinfo {author} {\bibfnamefont {V.}~\bibnamefont
  {Jak{\v{s}}i{\'c}}}\ and\ \bibinfo {author} {\bibfnamefont {C.-A.}\
  \bibnamefont {Pillet}},\ }\href@noop {} {\bibfield  {journal} {\bibinfo
  {journal} {Journal of Mathematical Physics}\ }\textbf {\bibinfo {volume}
  {55}},\ \bibinfo {pages} {075210} (\bibinfo {year} {2014})}\BibitemShut
  {NoStop}%
\bibitem [{\citenamefont {Pekola}\ \emph {et~al.}(2016)\citenamefont {Pekola},
  \citenamefont {Suomela},\ and\ \citenamefont {Galperin}}]{pekola2016finite}%
  \BibitemOpen
  \bibfield  {author} {\bibinfo {author} {\bibfnamefont {J.}~\bibnamefont
  {Pekola}}, \bibinfo {author} {\bibfnamefont {S.}~\bibnamefont {Suomela}}, \
  and\ \bibinfo {author} {\bibfnamefont {Y.}~\bibnamefont {Galperin}},\
  }\href@noop {} {\bibfield  {journal} {\bibinfo  {journal} {Journal of Low
  Temperature Physics}\ ,\ \bibinfo {pages} {1}} (\bibinfo {year}
  {2016})}\BibitemShut {NoStop}%
\bibitem [{\citenamefont {Bocchieri}\ and\ \citenamefont
  {Loinger}(1957)}]{bocchieri1957quantum}%
  \BibitemOpen
  \bibfield  {author} {\bibinfo {author} {\bibfnamefont {P.}~\bibnamefont
  {Bocchieri}}\ and\ \bibinfo {author} {\bibfnamefont {A.}~\bibnamefont
  {Loinger}},\ }\href@noop {} {\bibfield  {journal} {\bibinfo  {journal}
  {Physical Review}\ }\textbf {\bibinfo {volume} {107}},\ \bibinfo {pages}
  {337} (\bibinfo {year} {1957})}\BibitemShut {NoStop}%
\bibitem [{\citenamefont {Skrzypczyk}\ \emph {et~al.}(2011)\citenamefont
  {Skrzypczyk}, \citenamefont {Brunner}, \citenamefont {Linden},\ and\
  \citenamefont {Popescu}}]{skrzypczyk2011smallest}%
  \BibitemOpen
  \bibfield  {author} {\bibinfo {author} {\bibfnamefont {P.}~\bibnamefont
  {Skrzypczyk}}, \bibinfo {author} {\bibfnamefont {N.}~\bibnamefont {Brunner}},
  \bibinfo {author} {\bibfnamefont {N.}~\bibnamefont {Linden}}, \ and\ \bibinfo
  {author} {\bibfnamefont {S.}~\bibnamefont {Popescu}},\ }\href@noop {}
  {\bibfield  {journal} {\bibinfo  {journal} {Journal of Physics A:
  Mathematical and Theoretical}\ }\textbf {\bibinfo {volume} {44}},\ \bibinfo
  {pages} {492002} (\bibinfo {year} {2011})}\BibitemShut {NoStop}%
\bibitem [{\citenamefont {Uzdin}\ \emph {et~al.}(2015)\citenamefont {Uzdin},
  \citenamefont {Levy},\ and\ \citenamefont {Kosloff}}]{uzdin2015equivalence}%
  \BibitemOpen
  \bibfield  {author} {\bibinfo {author} {\bibfnamefont {R.}~\bibnamefont
  {Uzdin}}, \bibinfo {author} {\bibfnamefont {A.}~\bibnamefont {Levy}}, \ and\
  \bibinfo {author} {\bibfnamefont {R.}~\bibnamefont {Kosloff}},\ }\href@noop
  {} {\bibfield  {journal} {\bibinfo  {journal} {Physical Review X}\ }\textbf
  {\bibinfo {volume} {5}},\ \bibinfo {pages} {031044} (\bibinfo {year}
  {2015})}\BibitemShut {NoStop}%
\bibitem [{\citenamefont {Korzekwa}\ \emph {et~al.}(2016)\citenamefont
  {Korzekwa}, \citenamefont {Lostaglio}, \citenamefont {Oppenheim},\ and\
  \citenamefont {Jennings}}]{korzekwa2016extraction}%
  \BibitemOpen
  \bibfield  {author} {\bibinfo {author} {\bibfnamefont {K.}~\bibnamefont
  {Korzekwa}}, \bibinfo {author} {\bibfnamefont {M.}~\bibnamefont {Lostaglio}},
  \bibinfo {author} {\bibfnamefont {J.}~\bibnamefont {Oppenheim}}, \ and\
  \bibinfo {author} {\bibfnamefont {D.}~\bibnamefont {Jennings}},\ }\href@noop
  {} {\bibfield  {journal} {\bibinfo  {journal} {New Journal of Physics}\
  }\textbf {\bibinfo {volume} {18}},\ \bibinfo {pages} {023045} (\bibinfo
  {year} {2016})}\BibitemShut {NoStop}%
\bibitem [{\citenamefont {Mitchison}\ \emph {et~al.}(2015)\citenamefont
  {Mitchison}, \citenamefont {Woods}, \citenamefont {Prior},\ and\
  \citenamefont {Huber}}]{mitchison2015coherence}%
  \BibitemOpen
  \bibfield  {author} {\bibinfo {author} {\bibfnamefont {M.~T.}\ \bibnamefont
  {Mitchison}}, \bibinfo {author} {\bibfnamefont {M.~P.}\ \bibnamefont
  {Woods}}, \bibinfo {author} {\bibfnamefont {J.}~\bibnamefont {Prior}}, \ and\
  \bibinfo {author} {\bibfnamefont {M.}~\bibnamefont {Huber}},\ }\href@noop {}
  {\bibfield  {journal} {\bibinfo  {journal} {arXiv preprint arXiv:1504.01593}\
  } (\bibinfo {year} {2015})}\BibitemShut {NoStop}%
\bibitem [{\citenamefont {Gardas}\ and\ \citenamefont
  {Deffner}(2015)}]{gardas2015thermodynamic}%
  \BibitemOpen
  \bibfield  {author} {\bibinfo {author} {\bibfnamefont {B.}~\bibnamefont
  {Gardas}}\ and\ \bibinfo {author} {\bibfnamefont {S.}~\bibnamefont
  {Deffner}},\ }\href@noop {} {\bibfield  {journal} {\bibinfo  {journal}
  {Physical Review E}\ }\textbf {\bibinfo {volume} {92}},\ \bibinfo {pages}
  {042126} (\bibinfo {year} {2015})}\BibitemShut {NoStop}%
\bibitem [{\citenamefont {Aberg}(2016)}]{aberg2016fully}%
  \BibitemOpen
  \bibfield  {author} {\bibinfo {author} {\bibfnamefont {J.}~\bibnamefont
  {Aberg}},\ }\href@noop {} {\bibfield  {journal} {\bibinfo  {journal} {arXiv
  preprint arXiv:1601.01302}\ } (\bibinfo {year} {2016})}\BibitemShut {NoStop}%
\bibitem [{\citenamefont {Marshall}\ \emph {et~al.}(2010)\citenamefont
  {Marshall}, \citenamefont {Olkin},\ and\ \citenamefont
  {Arnold}}]{marshall2010inequalities}%
  \BibitemOpen
  \bibfield  {author} {\bibinfo {author} {\bibfnamefont {A.~W.}\ \bibnamefont
  {Marshall}}, \bibinfo {author} {\bibfnamefont {I.}~\bibnamefont {Olkin}}, \
  and\ \bibinfo {author} {\bibfnamefont {B.}~\bibnamefont {Arnold}},\
  }\href@noop {} {\emph {\bibinfo {title} {Inequalities: theory of majorization
  and its applications}}}\ (\bibinfo  {publisher} {Springer Science \& Business
  Media},\ \bibinfo {year} {2010})\BibitemShut {NoStop}%
\bibitem [{\citenamefont {Renes}(2014)}]{renes2014work}%
  \BibitemOpen
  \bibfield  {author} {\bibinfo {author} {\bibfnamefont {J.~M.}\ \bibnamefont
  {Renes}},\ }\href@noop {} {\bibfield  {journal} {\bibinfo  {journal} {The
  European Physical Journal Plus}\ }\textbf {\bibinfo {volume} {129}},\
  \bibinfo {pages} {1} (\bibinfo {year} {2014})}\BibitemShut {NoStop}%
\end{thebibliography}%

\clearpage
\widetext
\appendix

\section{Derivation of the modified free energy: Preliminaries}

\subsection{Thermal operations with fluctuating work}\label{TOF}
We first characterize the type of thermodynamic transformation that we consider, which we refer to as \emph{thermal operations with fluctuating work}. We make use of a widely applied set-up for defining the work of a thermodynamical transformation \cite{skrzypczyk2014work, masanes2014derivation, gemmer2015single,alhambra2016second}. 
Our setting consists of a system with Hamiltonian $H_{\rm S}$, the bath with Hamiltonian $H_{\rm B}$ initially in the thermal state, and an ideal weight with Hamiltonian $H_\mathrm{W}=  \int_\mathbb{R} dx\, x |x\rangle \! \langle x|$, where the orthonormal basis $\{|x\rangle, \forall\, x\in \mathbb R\}$ represents the position of the weight. Any joint transformation of system, bath and weight is represented by a Completely Positive Trace Preserving (CPTP) map $\Gamma_\mathrm{SBW}$ satisfying the following conditions:
\begin{description}
  \item[Microscopic reversibility (Second Law)] It has an (CPTP) inverse $\Gamma_\mathrm{SBW}^{-1}$, which implies unitarity $\Gamma_\mathrm{SBW} (\rho_\mathrm{SBW}) = U\rho_\mathrm{SBW} U^\dagger$.
    
  \item[Energy conservation (First Law)] 
  $[U,H_\mathrm{S} +H_\mathrm{B} +H_\mathrm{W}] =0$. 
  %\alv{This changes when the Hamiltonian of the system changes} {\color{blue} Jon: what does it change to?}

  \item[Independence from the ``position" of the weight]
  The unitary commutes with the translations on the weight $[U,\Delta_\mathrm{W}] = 0$. The generator of the translations $\Delta_\mathrm{W}$ is canonically conjugated to the position (or energy) of the weight $[H_\mathrm{W}, \Delta_\mathrm{W}] = \i$.

  \item[Classicality of work] Before and after applying the global map $\Gamma_\mathrm{SBW}$ the position of the weight is measured, obtaining outcomes $|x\rangle$ and $|x+W\rangle$ respectively. In general, the work $W$ is a fluctuating random variable.

\end{description}
Let us define the dephasing map as
\begin{equation}
  \Theta_\alpha [\rho_\mathrm{S}] = 
  \int_\mathbb{R}\! dt\, 
  \e^{i\alpha t}\,
  \e^{iH_\mathrm{S} t} \rho_\mathrm{S} \e^{-iH_\mathrm{S} t}
  \ .
\end{equation}
Energy conservation, the classicality of work and the fact that the initial state of the bath commutes with its Hamiltonian imply
\begin{equation}
  \Theta_\alpha \circ \Gamma_\mathrm{S}
  =
  \Gamma_\mathrm{S} \circ \Theta_\alpha 
  \ ,
\end{equation}
where $\Gamma_\mathrm{S}$ is the transformation of the system.
See \cite{richens2016quantum} for a proof.
Setting $\alpha=0$ we have that, if the initial state of the system commutes with $H_\mathrm{S}$, then so does the final state. And, if the final state of the system commutes with $H_\mathrm{S}$, then so does the initial one. 
In this paper we only consider processes in which one of the two states (and hence both) is diagonal. For example, optimal work extraction is one such process. For processes where the initial and final states involve coherences, our results provide an upper bound to the work.

%In the case were initial and final state of the system commute with $H_\mathrm{S}$, then the initial and final joint state of system and bath also commutes with $H_\mathrm{S} + H_\mathrm{B}$. This, follows from the classicality of work. However, the joint Hamiltonian $H_\mathrm{S} + H_\mathrm{B}$ is non-interacting, and hence, in general, it has degeneracy. This allows for the joint final state $\rho' _\mathrm{SB}$ to be highly entangled inside each degenerate subspace of $H_\mathrm{S} + H_\mathrm{B}$. 
%
%Next we prove that by composing $U_\mathrm{SBW}$ with another unitary also commuting with $H_\mathrm{S} + H_\mathrm{B} + H_\mathrm{W}$ the final state $\rho' _\mathrm{SB}$ becomes separable and with the same reduced state $\rho' _\mathrm{S}$. 

Let us define the stochastic matrix
\begin{equation}\label{t}
  t(s',E'|s,E) =     
  {\rm tr}\! \left[ \left(|s'\rangle\! \langle s'|
  \otimes Q_{E'} \otimes \mathbb{I} \right) U 
  \left(|s\rangle\! \langle s| \otimes 
  \frac{Q_E} {\Omega(E)} \otimes \rho_W
  \right) U^\dagger \right] \ ,
\end{equation}
where $Q_E$ is the projector onto the eigenspace of $H_\mathrm{B}$ with energy $E$, and $\Omega(E) = {\rm tr} Q_E$ is the density of states.
This matrix only contains partial information about $U$, but this is enough to derive relevant constraints for any transformation of the type described above.

%\subsection{stuff}

%Having to deal with diagonal states only is not an important limitation, as it is known \cite{lostaglio2015description} that, through this set of maps, no extra work can be extracted from states that have coherence in energy, compared to the one that can be extracted from their dephased versions. For cases where such a thing may be done, via using additional coherent ancillas, see e.g. \cite{korzekwa2016extraction}. This framework incorporates the possibility of changing the Hamiltonian of the system. For a full treatment of this, see Supplementary note 1B of \cite{richens2016quantum} \added{ADD ALL THIS...}.

%This assumption simplifies calculations, by allowing us to work with system states $\Omega = \sum_s x_s \ketbra{s}{s}$, i.e. de-phased in the system energy eigenbasis (see the Preliminaries section of the supplementary notes).

\begin{comment}
In this appendix we present the framework in which we derive model-independent corrections to the second law with fluctuating work. We consider a system in the presence of a finite thermal bath undergoing transition $(\rho, H_\mathcal{S})\rightarrow (\sigma, H'_\mathcal{S})$ where $\rho = \sum\limits_s p(s) \ketbra{s}{s}$, $\sigma = \sum\limits_{s'} q(s') \ketbra{s'}{s'}$, where $\ket{s}$ is the energy eigenstate of the initial system Hamiltonian $H_\mathcal{S}\ket{s} = \epsilon_s \ket{s}$ and $\ket{s'}$ is the energy eigenstate of the final system Hamiltonian $H'_\mathcal{S}\ket{s'} = \epsilon'_s \ket{s'}$.
\end{comment}

The energy eigenvectors of the system are labeled by $s$ and the corresponding eigenvalues are $\epsilon_s$. 
%The number of bath states of energy $E$ is given by the density of states $\Omega(E)$. 
Let us derive some properties for the map~\eqref{t}. 
The average work extracted by the map is given by 
\begin{equation}\label{eq:avgwork}
\langle W \rangle= \sum\limits_{EE'ss'}p(s)p_b(E)t(E'\,s'\, | \, E\, s)[(E-E') +(\epsilon_s - \epsilon_{s'})]
\ ,
\end{equation}
where $p(s)$ is the given initial state of the system, and $p_b(E) = \frac 1 Z \Omega(E) e^{-\beta E}$ is the probability of finding the bath in the energy subspace $E$.
It is easy to check that the map~\eqref{t} satisfies
\begin{empheq}[box=\widefbox]{align}
\sum\limits_{E\, E'\, s}p(s)p_b(E)t(E'\,s'\, | \, E\, s) &= q(s') \quad \forall \, s' \label{1} \\
\sum\limits_{E' \, s'} t(E'\,s'\, | \, E\, s) &= 1 \quad \forall \, E, \, s\label{2} \\
  t(E'\,s'\, | \, E\, s) &\geq 0  \quad \forall E, E', s, s' \label{3}\\
\sum\limits_{E\, s} t(E'\,s'\, | \, E\, s)\frac{\Omega(E)}{\Omega(E')} &= 1 \quad \forall \quad s', \, E' \label{4}
\end{empheq}
The first condition is that the reduced map on the system achieves the desired state transformation $\Gamma_S(\rho_S)=\rho'_S$, with the second and third conditions ensuring that $t$ is a stochastic matrix. 
The fourth constraint follows from microscopic reversibility. This can be interpreted as the map $t$ being one-to-one on set of joint system-bath states, where $\sum_{E\, s} t(E'\,s'\, | \, E\, s)\Omega(E)$ is the number of states mapped to joint energy subspace $(E',s')$, which has degeneracy $\Omega (E')$. For similar uses of this set-up see \cite{masanes2014derivation,richens2016quantum}.
Next we show that the equalities in the reversibility constraint can be replaced by inequalities.

%Having to deal with diagonal states only is not an important limitation, as it is known \cite{lostaglio2015description} that, through this set of maps, no extra work can be extracted from states that have coherence in energy, compared to the one that can be extracted from their dephased versions. For cases where such a thing may be done, via using additional coherent ancillas, see e.g. \cite{korzekwa2016extraction}. This framework incorporates the possibility of changing the Hamiltonian of the system. 

\subsection{Thermal operations with non-constant Hamiltonian}
\label{ss:toH}

Thermal operations are general enough to include the case where the initial Hamiltonian of the system $H_\mathrm{S}$ is different than the final one $H'_\mathrm{S}$. This is done by including an additional qubit $X$ which plays the role of a switch (as in \cite{horodecki2013fundamental,aberg2016fully}). Now the total Hamiltonian is
\begin{equation}
  H = H_\mathrm{S}\otimes |0\rangle _\mathrm{X}%{\rm switch} 
  \langle 0| + H'_\mathrm{S} \otimes |1\rangle_\mathrm{X}%{\rm switch} 
  \langle 1| +H_\mathrm{B} +H_\mathrm{W}
\ ,
\end{equation}
and energy conservation reads 
$[V,H] =0$, where $V$ is the global unitary when we include the switch.
We impose that the initial state of switch is $|0\rangle_\mathrm{X}$ and the global unitary $V$ performs the switching
\begin{equation}
  \label{SS}
V \left( 
\rho_\mathrm{SBW} \otimes |0\rangle_\mathrm{X} \langle 0|\right)  V^\dagger = \rho'_\mathrm{SBW} \otimes |1\rangle_\mathrm{X} \langle 1| 
\ ,
\end{equation}
for any $\rho_\mathrm{SBW}$.
This implies
\begin{equation}
  V=  U
  \otimes |1\rangle_\mathrm{X} \langle 0| +
  \tilde U
  \otimes |0\rangle_\mathrm{X} \langle 1|
  \ ,
\end{equation}
where $U$ and $\tilde U$ are unitaries on system, bath and weight. Condition $[V,H] =0$ implies
\begin{equation}
  \label{EC}
  U (H_\mathrm{S}+H_\mathrm{B}+H_\mathrm{W}) =
  (H'_\mathrm{S}+H_\mathrm{B}+H_\mathrm{W}) U
  \ .
\end{equation}
Therefore, the reduced map on system, bath and weight can be written as
\begin{equation}
  \label{rmm}
  \Gamma_\mathrm{SBW} (\rho_\mathrm{SBW}) =
  U \rho_\mathrm{SBW} 
  U^\dagger\ ,
\end{equation}
where the unitary $U$ does not necessarily commute with $H_\mathrm{S}+H_\mathrm{B}+H_\mathrm{W}$ nor $H'_\mathrm{S}+H_\mathrm{B}+H_\mathrm{W}$ but satisfies~\eqref{EC}.

\section{Upper bounds on work}\label{deriv}

In this appendix we prove an upper bound for the work extracted by a thermal operation. The only assumption on the bath is that the density of states $\Omega (E)$ is convex. Hence this applies to a large class of baths including finite and infinite ones.

\subsection{General upper bound}

The following theorem establishes an upper bound for the work in terms of how similar is the system-bath's final state  $P(E's')$ to the product state $p_G(E')P(s')$ where $p_G(E')$ is the Gibbs state and $P(s')$ is the marginal of $P(E's')$. Below we show that this can only happen when the baht is infinite.

%\begin{itemize}
%\item[Condition 1.] The bath has thermalized. Individual microstates occur with probability $p_E\propto e^{-\beta E}$ .
%\item[Condition 2.] The bath density of states $\Omega (E)$ is convex in $E$.
%\end{itemize}

%The first condition is a standard requirement for a system to be considered a thermal bath. It implies that the probability of the bath having energy $E$ is given by $p_b(E)\propto e^{-\beta E}\Omega(E)$ where $\Omega (E)$ is the bath density of states. Without it, the bath will be out of equilibrium and work can be extracted freely from it without the need for an out of equilibrium system. The second condition \added{get lluis to say something clever here}. Under these assumptions we derive the following upper bound to the work extracted by a given thermal operation on the product state of a system and the bath $\Gamma_{SB}\left(\rho_B\otimes \rho_S \right)$, characterised by $\{t(E's'|Es) \}$.\\

\medskip
\noindent 
\textbf{Theorem 1:}\label{upper bound lemma} \emph{ The work extracted by the thermal operation $t(E's'|Es)$ is upper bounded by }
\begin{align}
  \label{T3}
W &\leq
-\Delta F 
- \frac{1}{\beta}
D[P(E's') \parallel p_G(E')P(s')]\ ,
\end{align} 
\noindent \emph{Where $D[x \parallel y]= \sum_{x,y} p(x)\log \left(p(x)/q(y) \right)$ is the relative entropy, $p_G(E')$ is the probability distribution of the bath energy $E'$, $P(s')$ is the final probability distribution of the dephased system, and $P(E's')$ is the joint final distribution of system and bath. }

\begin{proof}
For each thermal operation $t(E's'|Es)$, we define the following two functions
\begin{align}
R(E's'|s) &= \sum\limits_E p_G(E) t(E's'|Es) \label{mapping}\ ,
\\
Q(E' s' s) &= \frac{\sum\limits_E p_G(E) t(E's'|Es) (E-E')}{R(E's'|s)}
\ . \label{Q def}
\end{align}
%This map associated to the set of transition probabilities $\{t(E's'|Es)\}$ a set of variables $\{t(E's'|s), Q(E',s',s)\}$. 
The work extracted by $t(E's'|Es)$ can be expressed in terms of $R,Q$ in the following way
\begin{align}
  W &= \sum\limits_{EE'ss'}
  P(s)p_G(E) R(E's'|Es) (E-E' +\mathcal E_s - \mathcal E_{s'})
  \nonumber \\ \label{work} 
  &= \sum\limits_{E'ss'} P(s) 
  R(E's'|s) Q(E's's) +\Delta U
\end{align}
($\Delta U$ taken to be minus the change in internal energy of the system) where we have used \eqref{1},  \eqref{2} and
\begin{equation}
\sum\limits_{EE'ss'}P(s)p_G(E)t(E's'|Es)(\mathcal E_s-\mathcal E_{s'}) = \sum\limits_s P(s) \epsilon_s - \sum\limits_{s'} P(s') \epsilon'_{s'}=\Delta U
\end{equation}
The condition for microscopic reversibility \eqref{4} implies the following conditions on $R$ and $Q$
\begin{align*}
1 &= \sum\limits_{Es}t(E's'|Es)\frac{\Omega (E)}{\Omega(E')}\\
&= \sum\limits_{Es}t(E's'|Es)\frac{p_G (E)}{p_G(E')}e^{\beta (E-E')}\\
&= \sum\limits_{Es}\frac{R(E's'|s)}{R(E's'|s)}t(E's'|Es)\frac{p_G (E)}{p_G(E')}e^{\beta (E-E')}\\
&\geq \sum\limits_s \frac {R(E's'|s) } {p_G(E')}
  \exp\!\left[ {\beta \frac{\sum_E p_G(E) t(E's'|Es) (E-E')}{R(E's'|s)}} \right]\\
  \numberthis \label{condi RQ}
&= \sum\limits_s R(E's'|s) \frac{e^{\beta Q(E's's)}}{p_G(E')}
\ .
\end{align*}
Note that the only information from $t(E's'|Es)$ that appears in the bound \eqref{T3} is $P(E' s')$, which is fully contained in $R$. Hence, in order to obtain this bound, we optimize over all possible $Q$s subject to constraint \eqref{condi RQ}, and keep $R$ fixed. 
%calculate an expression the optimal work over these new variables $\{t(E's'|s), Q(E',s',s)\}$. Note that if $\{t(E's'|s), Q(E',s',s)\}$ satisfy the reversibility inequality, this does not necessarily imply that there exists a microscopically reversible $\{t(E's'|Es)\}$ that gives rise to this optimal set $\{t(E's'|s), Q(E',s',s)\}$ by \ref{mapping}. Therefore the optimal work over $\{t(E's'|s), Q(E',s',s)\}$ constitutes an upper bound to the work achieved by a thermal operation $\{t(E's'|Es) \}$, as the microscopic reversibility constraint for $\{t(E's'|s), Q(E',s',s)\}$ has been relaxed. The relaxed reversibility constraint is
The value of the work for the optimal $Q$ is found by maximizing the Lagrangian 
\begin{equation}
\mathcal L =  \sum\limits_{E'ss'}P(s) R(E's'|s) Q(E' s' s) + \sum\limits_{E's'}\lambda_{E's'}\left(p_G(E') - \sum\limits_s R(E's'|s)e^{\beta Q(E's's)} \right)\ .
\end{equation}
Taking the derivative over $Q(E's's)$  and equating to zero we obtain
%We now find an upper bound for the average work extracted by the thermal operation by optimizing this Lagrangian over $Q(E',s',s)$ and $\lambda_{E's'}$. Note that \eqref{rev} does not necessarily imply \eqref{4}, but instead gives a relaxation of it. Therefore this is a less constrained optimization problem and constitutes an upper bound to the work. Maximizing w.r.t $Q(E',s',s)$ gives 
\begin{equation}
Q(E's's) = \frac{1}{\beta}\log \left(\frac{P(s)}{\lambda_{E's'}\beta} \right)
\end{equation}
substituting this back into the Lagrangian and taking the derivative w.r.t the Lagrange multiplier $\lambda_{E's'}$ and equating to zero gives 
\begin{equation}
\frac{1}{\lambda_{E's'}\beta} 
= 
\frac{p_G(E')}{\sum\limits_{s}P(s) R(E's'|s)}
= 
\frac{p_G(E')}{P(E's')}\ ,
\end{equation}
where for the last equality we used the definition of $R$ given in  \eqref{mapping}.
This gives the optimal $Q$
\begin{equation}
Q(E's's) = \frac{1}{\beta}\log \left(\frac{P(s)p_G(E')}{P(E's')} \right) \label{Q}\ .
\end{equation}
And thus the optimal work 
\begin{equation}
W \leq \Delta U + \frac{1}{\beta}\sum\limits_{E'ss'}P(s) R(E's'|s) \log \left(\frac{P(s)p_G(E')}{P(E's')} \right) \ ,
\end{equation}
which can also be written as
\begin{eqnarray}
W&\leq& \Delta U -\frac{1}{\beta}\Delta H - \frac{1}{\beta}\sum\limits_{E'ss'}P(s) R(E's'|s)\log \left(\frac{P(s)}{P(s')}\right)\\ 
&+& \frac{1}{\beta}\sum\limits_{E'ss'}P(s)  R(E's'|s) \log \left(\frac{P(s)p_G(E')}{P(E's')} \right) \, , 
\end{eqnarray}
where $\Delta H=H(\rho) - H(\rho')$, which simplifies to

\begin{equation}
W\leq -\Delta F - \frac{1}{\beta}
\sum\limits_{E's'}P(E's')\log\left( \frac{P(E's')}{p_G(E')P(s')}\right) \label{general expression}
\end{equation}
where we have used \eqref{1} and \eqref{2}.  

\end{proof}

This shows that the full free energy can only be extracted as work when: (i) the state of the bath remains thermal $P(E') = p_G(E')$, and (ii) system and bath end up uncorrelated $P(E's')=p_G(E')P(s')$. 
In the following section we prove that these two ideal conditions can only be achieved when the bath is infinite. 
We note that the bound \eqref{T3} requires knowledge of the final joint state of system and bath. Te following lemma gives a different bound that only depends on the initial and final states of the system and $\gamma$, and it is tight.

\subsection{Tight upper bound for finite baths}

First we address the case of infinite bath ($\gamma =0$). 

\begin{lemma}
When $\gamma=0$ all thermal operations  saturating the inequality $\langle W \rangle \geq -\Delta F$ are of the form \[ t(E's'|Es)=f(E's's)\delta (E-E'-f_{s'}+f_s)\label{infinite optimal} \] where $f(E's's)$ obeys
\begin{eqnarray}
\sum\limits_{s'}f(E-f_{s'}+f_s, s', s) &=& 1 \quad \forall \  s \label{new 1} \\
\sum\limits_{s}P(s) f(E's's) &=& P(s') \quad \forall \ s' 
\label{new 2} \\
0 \leq f(E's's) &\leq& 1 \quad \forall \ E's's \label{new 3}
\end{eqnarray}
where $f_s = -\beta^{-1}\log P(s)$ is the fine-grained entropy.
\end{lemma}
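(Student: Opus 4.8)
The plan is to read the saturation conditions off directly from the proof of Theorem~1, where the bound was obtained in two separate steps: a Jensen inequality that turned the exact reversibility constraint \eqref{4} into the inequality \eqref{condi RQ}, and a Lagrange optimization over $Q$ with $R$ (equivalently $P(E's')$) held fixed. Since the work of a given operation is the linear functional $W = \Delta U + \sum_{E's's}P(s)R(E's'|s)Q(E's's)$ with $Q$ determined by $t$ through \eqref{Q def}, one has the chain $W \le W^*(R) \le -\Delta F$, where $W^*(R) = -\Delta F - \tfrac1\beta D[P(E's')\,\|\,p_G(E')P(s')]$ is the optimized value. Saturation $\langle W\rangle = -\Delta F$ forces both inequalities to be equalities, and I would extract one structural constraint on $t$ from each.

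First I would use $W^*(R) = -\Delta F$. Since the relative entropy is non-negative, this is equivalent to $D=0$, i.e.\ the final joint state factorizes as $P(E's') = p_G(E')P(s')$. Specializing to $\gamma=0$, where $\Omega(E)/\Omega(E') = e^{\beta(E-E')}$ and $p_G$ is flat, the optimal value \eqref{Q} becomes $Q^*(E's's) = \tfrac1\beta\log\frac{P(s)p_G(E')}{P(E's')} = \tfrac1\beta\log\frac{P(s)}{P(s')} = f_{s'}-f_s$, using $f_s = -\beta^{-1}\log P(s)$. Next I would use $W = W^*(R)$: because $W$ is linear in $Q$ while $e^{\beta Q}$ is strictly convex, the maximizer over the feasible set $\{Q:\sum_s R(E's'|s)e^{\beta Q}/p_G(E')\le 1\}$ is unique, so the actual $Q$ must equal $Q^*$ and must saturate \eqref{condi RQ}. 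But \eqref{condi RQ} was derived from \eqref{4} by Jensen, and equality in Jensen requires $E-E'$ to be deterministic given $(E',s',s)$. Hence $t(E's'|Es)$ is supported on $E-E' = Q^* = f_{s'}-f_s$, which is exactly the claimed form $t(E's'|Es) = f(E's's)\,\delta(E-E'-f_{s'}+f_s)$ with $f\ge 0$.

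It then remains to show that the amplitude $f$ obeys the three listed conditions, which I would obtain by substituting the delta-form into the defining constraints. Performing the $E'$-integral against the delta in the normalization \eqref{2} gives $\sum_{s'} f(E-f_{s'}+f_s,s',s)=1$, which is \eqref{new 1}. Evaluating the reversibility constraint \eqref{4}, with $\Omega(E)/\Omega(E')=e^{\beta(E-E')}$, on the support $E-E'=f_{s'}-f_s$ produces the factor $e^{\beta(f_{s'}-f_s)} = P(s)/P(s')$, so $\sum_s f(E's's)P(s) = P(s')$, which is \eqref{new 2}. Positivity \eqref{3} gives $f\ge 0$, and together with \eqref{new 1} (a sum of non-negative terms equal to one) this yields $f\le 1$, i.e.\ \eqref{new 3}. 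Finally I would check the marginal constraint \eqref{1} for consistency: inserting the delta-form and flat $p_G$ gives $R(E's'|s)=p_G\,f(E's's)$, so $P(E's')=\sum_s P(s)R = p_G\sum_s P(s)f = p_G P(s') = p_G(E')P(s')$ by \eqref{new 2}, reproducing the factorization $D=0$ self-consistently.

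The main obstacle I anticipate is making the $\gamma=0$ limit rigorous: the flat $p_G(E)$ is not normalizable and the Dirac delta over the (in principle discrete) bath spectrum is a formal object, so the uniqueness-of-maximizer and Jensen-equality arguments should really be phrased through a regularized $\gamma\to 0^{+}$ limit, or via a limiting sequence of normalized distributions on a truncated energy window in which $\int dE'\,p_G(E')=1$ holds exactly. A secondary point is the converse (sufficiency): one should verify that any $f$ satisfying \eqref{new 1}--\eqref{new 3} defines a legitimate stochastic matrix \eqref{t} meeting \eqref{1}--\eqref{4} and attaining $W=-\Delta F$, which follows by reversing the substitutions above.
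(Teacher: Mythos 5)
Your proposal is correct and follows essentially the same route as the paper's proof: saturation forces the relative-entropy term to vanish (hence $P(E's')=p_G(E')P(s')$ and $Q^*=f_{s'}-f_s$) and forces equality in the Jensen step of \eqref{condi RQ}, which pins $t$ to the delta-function form, after which the conditions on $f$ follow by substitution into the defining constraints. The only (immaterial) difference is that you extract \eqref{new 2} from the reversibility constraint \eqref{4} evaluated on the support of the delta, whereas the paper gets it from the factorization of $P(E's')$ and then takes the flat-$p_G$ limit; your closing remarks on regularizing the $\gamma\to 0$ limit and on checking sufficiency are valid points that the paper leaves implicit.
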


\begin{proof}
In order to achieve $W=-\Delta F$, a thermal operation $t(E's'|Es)$ must saturate the upper bound derived in Theorem 1, and the right hand side of the bound must be equal to $-\Delta F$. First note that in Theorem 1 the optimal $Q(E's's)$ are given uniquely by 
\begin{equation}
Q(E's's) = \frac{1}{\beta}\log \left(\frac{P(s) p_G(E')}{p(E's')} \right)\ .
\end{equation}
By Theorem \ref{upper bound lemma}, as the relative entropy is a distance measure for probability distributions the upper bound is $-\Delta F$ if and only if $P(E's')=p_G(E')P(s')$ and therefore, for all optimal thermal operations
\begin{equation}
Q(E's's) = \frac{1}{\beta}\log \left(\frac{P(s) }{P(s')} \right)=f_{s}-f_{s'}\ .
\end{equation}
Finally, note that the thermal operation only construct an optimal $Q$ if the reversibility constraint \eqref{condi RQ} is saturated. This requires that 
\begin{equation}
\sum\limits_{Es}\frac{R(E's'|s)}{p_G(E')}t(E's'|Es)\frac{p_G (E)}{R(E's'|s)}e^{\beta (E-E')}\\
= \sum\limits_s \frac {R(E's'|s) } {p_G(E')}
  \exp\!\left[ {\beta \frac{\sum_E p_G(E) t(E's'|Es) (E-E')}{R(E's'|s)}} \right]
\end{equation}
Therefore the sum over $E$ in the exponent must have only a single term by convexity, and therefore $t(E's'|Es)$ be a delta function killing the sum over $E$. This, combined with the definition of $Q(E's's)$ in \eqref{Q def} implies that $t(E's'|Es)$ is of the form
\begin{equation}
t(E's'|Es) = f(E's's)\delta (E-E'-f_{s'}+f_s) 
\end{equation}
where $f(E's's)$ is a function that, by substituting \eqref{infinite optimal} into \eqref{1},\eqref{2},\eqref{3} obeys
\begin{eqnarray}
\sum\limits_{s'}f(E-f_{s'}+f_s, s', s) &=& 1 \quad \forall \  s \label{stoc f 1}\\
\sum\limits_{s}P(s) f(E's's) 
p_G (E'+f_{s'}-f_s)
&=& P(s') p_G (E')\quad \forall \ s', E'
\label{stoc f 2}
\\
0 \leq f(E's's) &\leq& 1 \quad \forall \ E's's 
\end{eqnarray}
In the limit $\gamma\to 0$ the distribution $p_G (E)$ tends to a constant. Hence, equality~\eqref{stoc f 2} becomes~\eqref{new 2}.
\end{proof}
As we now know the form of any optimal thermal operation to zero order in $\gamma$, we can therefore express any optimal thermal operation to order $\gamma$ as a perturbation 
\[ 
t(E's'|Es) = f(E's's)\delta (E-E'-f_{s'}-f_s) +\gamma t_1(E's'|Es) 
\]
Substituting this into (\ref{1}-\ref{3}) and using (\ref{new 1}-\ref{new 3}) we get the following constraints for the general correction to the map

\begin{eqnarray}
\sum\limits_{E's'}t_1(E's'|s) &=& 0 \quad \forall \  s \label{1 pert} \\
\sum\limits_{s}P(s) t_1(E's'|s) &=& 0 \quad \forall \ s'  \label{2 pert} 
\end{eqnarray}
\begin{eqnarray}
\sum\limits_{E's'}t_1(E's'|s) &=& 0 \quad \forall \  s \label{1 pert} \\
\sum\limits_{E',s,E}P(s) t_1(E's'|s) &=& 0 \quad \forall \ s'  \label{2 pert} 
\end{eqnarray}

Using this we now find an expression for the optimal work in terms of $f(E's's)$ only.

\begin{lemma}
For any given transformation $P(s) \to P(s')$ there is a map $t(E's'|Es)$ which extracts work
\[W = -\Delta F - \frac{\gamma^2}{2\beta}\sum\limits_{E's'}P(s')p_G(E') E'^2 \left(  \frac{\sum\limits_s P(s) f(E's's)(f_{s'}-f_s)}{P(s')} \right)^2+\mathcal O(\gamma^2)\ ,
\] 
up to first order in $\gamma$. (Recall that $p_G (E') E'^2$ is of order $1/\gamma)$. \label{optimal work equality}
\end{lemma}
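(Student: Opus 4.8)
The plan is to evaluate the exact work formula of Theorem~1 for the perturbed optimal map and to expand the resulting relative entropy to leading order in $\gamma$. Since the bound of Theorem~1 depends on the operation only through the joint final distribution $P(E's')$ and is saturated by the optimal $Q$ found in its proof, the starting point is
\[
  W = -\Delta F - \frac{1}{\beta}\, D[P(E's') \parallel p_G(E')P(s')].
\]
Achievability is ensured by taking the map to be the zeroth-order delta form $f(E's's)\,\delta(E-E'-f_{s'}+f_s)$ — whose natural value of $Q$ is $f_{s'}-f_s$, the optimal one at $\gamma=0$ — corrected by a perturbation $\gamma\, t_1$ that restores the exact stochasticity and reversibility constraints \eqref{1 pert}--\eqref{2 pert} and realizes the optimal $Q$ to order $\gamma$.

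The central computation is the expansion of $P(E's') = \sum_{E,s} P(s)\, p_G(E)\, t(E's'|Es)$. With $p_G(E) \propto e^{-\gamma E^2/2}$ from \eqref{eq:densityapp}, the delta function collapses the $E$-sum, and I would write $p_G(E'+f_{s'}-f_s) = p_G(E')\exp[-\gamma E'(f_{s'}-f_s) - \tfrac{\gamma}{2}(f_{s'}-f_s)^2]$. Expanding the exponential and invoking constraint \eqref{new 2}, the leading term reconstructs $p_G(E')P(s')$ exactly, leaving the deviation
\[
  \delta(E's') \equiv P(E's') - p_G(E')P(s') = -\gamma E'\, p_G(E') \sum_s P(s) f(E's's)(f_{s'}-f_s) + \cdots .
\]
The crucial bookkeeping device is the scaling $E' \sim \gamma^{-1/2}$ forced by the Gaussian $p_G$, so that $\gamma E'$ is $O(\gamma^{1/2})$; thus $\delta$ is of relative size $O(\gamma^{1/2})$ in the bulk of the distribution, while the remaining terms from the exponential together with the $\gamma\, t_1$ contribution are genuinely subleading.

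I would then apply the second-order (chi-squared) expansion $D[P \parallel Q] = \tfrac12 \sum \delta^2/Q + \cdots$, which is legitimate because $\sum_{E's'}\delta = 0$ by normalization and because the Gaussian tails of $p_G$, where the relative deviation ceases to be small, are exponentially suppressed. Substituting $\delta$ and $Q = p_G(E')P(s')$ gives
\[
  D = \frac{\gamma^2}{2} \sum_{E's'} \frac{E'^2\, p_G(E')}{P(s')}\left( \sum_s P(s) f(E's's)(f_{s'}-f_s) \right)^2 + \cdots ,
\]
which, inserted into the work formula and rewritten with the factor $P(s')$ pulled inside the square, is exactly the claimed expression; since $p_G(E')E'^2 \sim 1/\gamma$, this explicit correction is $O(\gamma)$.

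The main obstacle, and the step deserving the most care, is proving that everything omitted contributes only at $O(\gamma^2)$. I expect to settle this by a parity argument exploiting the evenness of $p_G(E')$ in $E'$: the leading deviation $\delta$ is odd in $E'$, so its cross terms with the even, $O(\gamma)$ corrections (including the $t_1$ piece) vanish upon summing over the symmetric energy distribution, eliminating the would-be $O(\gamma^{3/2})$ term, as does the odd cubic term in the relative-entropy expansion. This argument requires choosing the perturbation $t_1$ so that its contribution to $P(E's')$ is even in $E'$ — a freedom left open by \eqref{1 pert}--\eqref{2 pert} — since an odd component of $t_1$ would otherwise pair with the factor $E'$ and contaminate the leading correction.
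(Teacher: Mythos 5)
Your proposal is correct and follows the same skeleton as the paper's proof: saturate the Theorem~1 bound with the perturbed map $f(E's's)\,\delta(E-E'-f_{s'}+f_s)+\gamma\, t_1$, expand $p_G(E'+f_{s'}-f_s)$ around $p_G(E')$, and extract the $\mathcal O(\gamma)$ correction from the quadratic term of the relative entropy using the power counting $E'\sim\gamma^{-1/2}$. Where you genuinely diverge is in how the residual terms are dispatched. The paper does not invoke the exact identity $\sum_{E's'}\left[P(E's')-p_G(E')P(s')\right]=0$ at the outset; it therefore picks up an apparent $\mathcal O(\gamma)$ contribution proportional to $\sum_{E's'}p_G(E')(\gamma E'^2-1)\sum_s P(s)f(E's's)(f_{s'}-f_s)^2$ and spends the second half of the proof showing it vanishes, by expanding the normalization condition and using \eqref{stoc f 1}--\eqref{stoc f 2} together with $\langle E'^2\rangle=\gamma^{-1}$. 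Your organization --- linear term exactly zero by normalization, hence $D=\tfrac12\sum_{E's'}\delta^2/(p_G(E')P(s'))+\mathcal O(\delta^3)$ --- makes that whole term automatically $\mathcal O(\gamma^2)$ and is the cleaner route; it also flags the Gaussian-tail issue the paper ignores. The one place where you are less careful than the paper is the elimination of the $\mathcal O(\gamma^{3/2})$ cross and cubic terms: your parity argument requires not only an even choice of $t_1$ (which you note) but also that $\sum_s P(s)f(E's's)(f_{s'}-f_s)$ be even in $E'$, which the constraints \eqref{new 1}--\eqref{new 3} do not enforce for a general admissible $f$; the paper instead uses \eqref{stoc f 1}--\eqref{stoc f 2} to strip the $E'$-dependence from such sums before invoking $\langle E'\rangle=0$ and $\langle E'^2\rangle=\gamma^{-1}$. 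Since the lemma is an existence statement you may restrict to $f$ with the required symmetry, but that restriction should be stated explicitly.
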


\begin{proof}
The upper bound to the work in Theorem 1 can be written as
$W\leq -\Delta F -\beta^{-1}\sum\limits_{E's'}P(E',s')\log \left(\frac{P(E',s')}{p_G(E')P(s')} \right)$. To first order in $\gamma$, optimal thermal operations are of the form  
\[t(E's'|Es) = f(E's's)\delta (E-E'-f_{s'}+f_s)+\gamma t_1(E's'|Es) \label{expanded t} \ .
\]

Let us calculate the joint final probability distribution $P(E's')$ to first order in $\gamma$. First,
\begin{align*}
 P(E's')&=\sum_{Es}P(s)p_G(E)t(E's'|Es)\\
  &= \sum\limits_s P(s) p_G(E'+f_{s'}-f_s)f(E's's) + \gamma 
\sum\limits_{s}P(s) p_G(E') t_1(E's'|Es)
\ .
\end{align*}
Second, we expand $p_G(E'+f_{s'}-f_s)$ to first order in $\gamma$
\begin{eqnarray}
p_G(E'+f_{s'}-f_s) = p_G(E')(1-\gamma E' (f_{s'}-f_s) + \gamma/2(E'^2\gamma -1)(f_{s'}-f_s)^2)+\mathcal O (\gamma^{3/2})\label{p expansion}
\end{eqnarray}
Note we have taken $E'\sim\gamma^{-1/2}$ as $\sum_{E'} p_b(E') E'^2 = \gamma^{-1}$. Therefore, to first order in $\gamma$, we recover the joint final probability distribution 
\begin{eqnarray}
P(E's') &=& \sum\limits_s P(s) p_G(E') f(E's's) \left(1-\gamma E' (f_{s'}-f_s) + \gamma/2(E'^2\gamma -1)(f_{s'}-f_s)^2 \right)\nonumber \\
&+&\gamma \sum\limits_{s}P(s)p_G(E')t_1(E's'|s)+\mathcal O (\gamma^{3/2})
\end{eqnarray}
and therefore 
\begin{eqnarray}
\frac{P(E's')}{P(s')p_G(E')}-1 &=&-\underbrace{\frac{\gamma E'}{P(s')}\sum\limits_s P(s)f(E's's)(f_{s'}-f_s)}_{[A]}+\underbrace{\frac{\gamma}{2P(s')}(E'^2\gamma -1)\sum\limits_s P(s)f(E's's)(f_{s'}-f_s)^2}_{[B]}\\
&+& \underbrace{\frac{\gamma}{P(s')p_G(E')} \sum\limits_{sE}P(s)p_G(E)t_1(E's'|Es)}_{[C]} \label{p(e,s)}
\end{eqnarray}
where we have used \eqref{stoc f 1}.

Define
\begin{equation}
x(E's') = \frac{P(E's')}{p_G(E')P(s')} -1 \ ,
\end{equation}
and note that $x(E's')\sim\mathcal O (\gamma)$. Using $\log (1+x(E's'))= x-1/2 x^2 +\mathcal O (x^3)$ as $x\ll 1$ we can expand equation \eqref{general expression} to first order in $x(E's')$, giving a correction to the free energy of 
\begin{eqnarray}
- \frac{1}{\beta}\sum\limits_{E's'}P(E's')\log\left( \frac{P(E's')}{p_G(E')P(s')}\right) &=& - \frac{1}{\beta}\sum\limits_{E's'}P(E's')\left(\frac{P(E',s')}{p_G(E')P(s')} -1  \right)\nonumber \\
&+&\frac{1}{\beta}\sum\limits_{E's'}P(E's')\left(\frac{P(E's')}{p_G(E')P(s')} -1  \right)^2 + \mathcal O (\gamma^2) \label{work correction}
\end{eqnarray}
We now substitute here to first order in $\gamma$. First, we show that the contribution from term $[C]$ in \eqref{p(e,s)} is zero, and hence we can choose an optimal thermal operation whereby $E$ is specified by $E',s',s$ and the upper bound to work is saturated. As term $[C]$ is $\mathcal O (\gamma)$, its contribution to the work appears in the $\mathcal O (x)$ term in \eqref{work correction} only
\begin{eqnarray}
-\frac{1}{\beta}\sum\limits_{E's'}P(E's')\left(\frac{P(E',s')}{p_G(E')P(s')} -1  \right)
\end{eqnarray}
To zero'th order in $\gamma$, $P(E's') = \sum_s P(s) f(E's's)p_b(E')= P(s')p_b(E')$. Substituting this and $[C]$ into the above expression gives 
\begin{eqnarray}
-\frac{1}{\beta}\sum\limits_{E's'}p_G(E')P(s')\frac{\gamma}{P(s')p_G(E')} \sum\limits_{sE}P(s)P_G(E)t_1(E's'|Es) =-\frac{\gamma}{\beta}\sum\limits_{E's'Es}P(s)P_G(E)t_1(E's'|Es)
\end{eqnarray}
Applying constraints \eqref{stoc f 1} and \eqref{stoc f 2} sets this term to zero. Substituting $[A]$ and $[B]$ into \eqref{work correction} and working to $\mathcal O (\gamma )$ gives
\begin{align*}
W &= -\Delta F  - \frac{1}{\beta}\sum\limits_{E's'}P(E's')\left(\frac{P(E',s')}{p_G(E')P(s')} -1  \right)+ \frac{1}{\beta}\sum\limits_{E's'}P(E's')\left(\frac{P(E's')}{p_G(E')P(s')} -1  \right)^2 + \mathcal O (\gamma^2)\\
&= -\Delta F - \frac{\gamma^2}{\beta}\sum\limits_{E's'}P(s')p_G(E')E'^2\left(\frac{\sum\limits_s P(s)f(E's's)(f_{s'}-f_s)}{P(s')} \right)^2-\frac{\gamma}{2\beta}\sum\limits_{E's'}p_G(E')P(s')(E'^2\gamma-1)\sum\limits_s \frac{P(s)f(E's's)(f_{s'}-f_s)^2}{P(s')}
\end{align*}
where we have used \eqref{stoc f 1}, \eqref{stoc f 2}, \eqref{p expansion} and $E'\sim \mathcal O (\gamma^{-1/2})$ as $\langle E'^2\rangle = \gamma^{-1}$. 

We now show that the second term is equal to zero to first order in $\gamma$
\begin{equation}
\frac{\gamma}{2\beta}\sum\limits_{E's'}p_G(E')P(s')(E'^2\gamma-1)\sum\limits_s \frac{P(s)f(E's's)(f_{s'}-f_s)^2}{P(s')} = 0
\end{equation}
First note that $\sum_{EE'ss'}P(s)p_G(E')t(E's'|Es)=1$ and therefore 
\begin{equation}
1 = \sum\limits_{E'ss'}P(s)p_G(E'-f_{s'}+f_{s})f(E's's) + \gamma \underbrace{\sum\limits_{EE'ss'}P(s)p_G(E)t_1(E's'|Es)}_{ \text{$=0$ by \eqref{1 pert} and \eqref{2 pert}}}
\end{equation}
Expanding $p_G(E'-f_{s'}+f_s)$ gives 
\begin{align*}
1 &= \sum\limits_{E's's}P(s)p_G(E')(1-E'\gamma (f_{s'}-f_s)+\frac{\gamma}{2}(E'^2\gamma-1)(f_{s'}-f_s)^2) f(E's's) +\mathcal O(\gamma^2)\\
&= 1 -\gamma\sum\limits_{E's's}P(s)p_g(E')E'f(E's's)(f_{s'}-f_s) + \frac{\gamma}{2}\sum\limits_{E's's}P(s)p_G(E')(E'^2\gamma-1)(f_{s'}-f_s)^2) f(E's's)+\mathcal O(\gamma^2)
\end{align*}
To see that the second term on the right hand side is zero, note that for $\sum_{E's's}P(s)p_G(E')(E'^2\gamma-1)f_{s'}$ \eqref{stoc f 1} implies $\sum_s P(s) f(E's's)=P(s')$ $\forall$ $E'$, removing the $E'$ dependence from the $f(E's's)$. Then using $\langle E'^2\rangle =\gamma^{-1}$ sets this to zero. Similarly, with \eqref{stoc f 2} the term $-\sum_{E's's}P(s)p_G(E')(E'^2\gamma-1)f_{s}$ is identical to zero. Therefore we recover that 
\begin{equation}
\frac{\gamma}{2}\sum\limits_{E's's}P(s)p_G(E')(E'^2\gamma-1)(f_{s'}-f_s)^2) f(E's's) = \mathcal O (\gamma^2)
\end{equation}
\end{proof}

\noindent \textbf{Theorem 2} \emph{In any process, work is bounded form above by }
\[W \leq -\Delta F - \frac{1}{2\beta C}\Delta S^2 \ .
\] \label{quadratic upper bound}

\begin{proof}
Starting with the equation for the optimal work derived in Lemma \ref{optimal work equality}
\begin{equation}
W = -\Delta F - \underbrace{\frac{\gamma^2}{2\beta}\sum\limits_{E's'}P(s')p_G(E')E'^2 \left(  \frac{\sum\limits_s P(s) f(E's's)(f_{s'}-f_s)}{P(s')} \right)^2}_{[A]} + \underbrace{\frac{\gamma}{2\beta}\sum\limits_{E's's}p_G(E')(E'^2\gamma -1)P(s)f(E's's)(f_{s'}-f_s)^2}_{[B]}
\end{equation}
the upper bound can be derived through simple convexity arguments. First consider term $[A]$. 
\begin{align*}
[A] &= - \frac{\gamma^2}{2\beta}\sum\limits_{E's'}P(s')p_G(E')E'^2 \left(  \frac{\sum\limits_s P(s) f(E's's)(f_{s'}-f_s)}{P(s')} \right)^2 \\
&\leq - \frac{\gamma}{2\beta}\sum\limits_{E'}p_G(E')E'^2 \left(  \frac{\sum\limits_{ss'} P(s) f(E's's)(f_{s'}-f_s)}{P(s')} \right)^2 \\
&= - \frac{\gamma}{2\beta^3}\Delta S^2
\end{align*}
where we have used \eqref{stoc f 1} and \eqref{stoc f 2} and the concavity of the function $g(x)=-x^2$. Finally, consider term $[B]$. By \eqref{stoc f 1} and \eqref{stoc f 2} the sum $\sum_{ss'} P(s) f(E's's)$ is convex positive for all $E'$, and therefore 
\begin{align*}
[B] &\leq \frac{\gamma}{2\beta}\sum\limits_{E'}p_G(E')(E'^2\gamma -1)(\sum\limits_{ss'}P(s)f(E's's)( f_{s'}-f_s))^2\\
&= \frac{\gamma}{2\beta^3}\sum\limits_{E'}p_G(E')(E'^2\gamma -1)\Delta S^2\\
&= 0
\end{align*}
where we have used \eqref{stoc f 1} and \eqref{stoc f 2}, the convexity of the function $g(x)=x^2$ and $\langle E'^2\rangle = \gamma^{-1}$. Using these upper bounds for terms [A] and [B], and $\gamma=\beta^2/C$, we construct the desired upper bound for the work associated with the thermal operation.  
\end{proof}

\subsection{Information Erasure and state formation}

\noindent \textbf{Theorem 3:} \emph{For the work extraction process $(\rho, H_S)\rightarrow (\frac{1}{d}\mathbb I, \mathbb I)$, where a system with state $\rho$ and Hamiltonian $\mathcal H_S$ is taken the thermal state of a trivial Hamiltonian, the upper bound to the work can be achieved} \[W= -\Delta F - \frac{1}{2\beta C}\Delta S^2 \] 

\emph{whereas for the reverse ``Landauer erasure'' process $(\rho, H_S)\leftarrow (\frac{1}{d}\mathbb I, \mathbb I)$, the optimal work that can be achieved is} \[W= -\Delta F - \frac{1}{2\beta C}\left(\Delta S^2 + \mathsf{Var}(\rho) \right) \] \emph{where $\mathsf{Var}(\rho)$ is the varentropy of the final state $\rho$.}

\begin{proof}
For the Landauer erasure process, note that $P(s')=1/d$ which is independent of $s'$. Simply choosing $f(E's's)=P(s')=1/d$ and substituting this into the equation derived in Lemma \ref{optimal work equality} gives 
\begin{align}
W&=-\Delta F - \frac{\gamma^2}{2\beta^3}\sum\limits_{E's'}P(s')p_G(E')E'^2 \left( \sum\limits_s P(s)(\log d + \log P(s)) \right)^2 + \frac{\gamma}{2\beta^3}\sum\limits_{E's's}p_G(E')(E'^2\gamma -1)P(s)(\log d + \log P(s))^2\\
&= -\Delta F - \frac{\gamma}{2\beta^3}\sum\limits_{E's'}P(s')p_G(E')E'^2 \left( \log d -S(\rho)) \right)^2+0 \\
&=  -\Delta F - \frac{1}{2\beta C}\Delta S^2
\end{align}
Saturating the bound in Theorem \ref{quadratic upper bound}. Note that erasure is carried out optimally with the same thermal operation as for the infinite bath, $t(E's'|Es)=P(s')\delta (E-E'-f_{s'}+f_s)$. For the reverse process, note that $f_{s'}-f_s= -\log d - \log P(s')$ which is independent of $s$. Using \eqref{stoc f 1} to give $\sum_s P(s) f(E's's)=P(s')$, the equality in Theorem \eqref{optimal work equality} becomes 
\begin{align*}
W &=-\Delta F - \frac{\gamma^2}{2\beta^3}\sum\limits_{E's'}P(s')p_G(E')E'^2 \left( \sum\limits_s \frac{P(s)f(E's's)(\log d + \log P(s'))}{P(s')} \right)^2 + \frac{\gamma}{2\beta^3}\sum\limits_{E's's}p_G(E')(E'^2\gamma -1)P(s)f(E's's)(\log d + \log P(s))^2\\
&= -\Delta F - \frac{\gamma^2}{2\beta^3}\sum\limits_{E's'}P(s')p_G(E')E'^2 \left( \log^2 d + \log^2 P(s') + 2 \log d \log P(s')  \right)+0\\
&= -\Delta F -\frac{\gamma}{2\beta C}\left(\left( \log d - S(\rho) \right)^2 +\sum\limits_{s'}P(s') \log^2 P(s') - \left(\sum\limits_{s'}P(s')\log P(s')\right)^2 \right)\\
&=  -\Delta F -\frac{\gamma}{2\beta C}\left(\Delta S^2 + \mathsf{Var}(\rho) \right)
\end{align*}
where $\mathsf{Var}(\rho)$ is the variance of the surprise $-\log P(s')$ of the state that is distilled. 
\end{proof}

\section{Thermal operations with deterministic work}\label{app:det}

In this section we address the issue of extracting and expending deterministic work, free from fluctuations. 
As a first step, we shall describe what is the power of thermal operations with a finite bath when no work is involved, and only system-bath interactions occur. Then we can consider work extraction and expenditure using the same tools, by considering the weight as part of our system. This is now possible because we do not allow for arbitrary statistical fluctuations in our weight.

We will be making the approximation that the energy scales of the system are small compared to the energy fluctuations of the bath. This is accurate provided the bath is sufficiently large.

In the present context, out of the assumptions in Section \ref{sec:TOfluc}, only the first two are relevant. The energy conservation constraint implies that the unitaries we can apply follow

\begin{equation}\label{eq:encon}
[U,H_S+H_B]=0,
\end{equation}
 as the weight is not involved for now.

We append a system with a Hamiltonian $H=\sum_s \epsilon_{s} \ket{s}\bra{s}$ and state diagonal in energy $\rho=\sum_s p(s) \ket{s}\bra{s}$  to the finite bath, which again we consider to be in a thermal state $\tau_\beta$. Because of the form of \eqref{eq:encon}, we want to look at subspaces of fixed total energy of $\rho \otimes \tau_\beta$, as it was done in \cite{horodecki2013fundamental}, as the dynamics of each such subspace is then independent from the rest. Let $\Pi_{E_{\text{tot}}}$ be a projector into a total energy subspace of energy $E_{\text{tot}}$, any such subspace can be written as

\begin{align}\label{eq:energyproj}
&\Pi_{E_{\text{tot}}} \rho \otimes \tau_\beta \Pi_{E_{\text{tot}}}= \\ & \nonumber k_{E_{\text{tot}}} \sum_s \sum_{b\in \mathcal M({E_{\text{tot}}}-\epsilon_{s})} p(s) e^{\beta(\epsilon_{s}-{E_{\text{tot}}})} \ket{s}\bra{s} \otimes \ket{b}\bra{b},
\end{align}
where $k_{E_{\text{tot}}}$ is a normalization constant, and $\mathcal M (E)$ is the set of eigenstates of the bath with energy $E$, which has cardinality $\Omega(E)$. In contrast to this, looking at each particular total energy subspace is not necessary in the case of fluctuating work because one can average over it (or more precisely, over $E$ and $\epsilon_{s}$, the energies of bath and system) in the final results. This is explicit for instance in Eq. \eqref{eq:avgwork}.

We assume that $\epsilon_{s}^2 << 1/\gamma$ for all $s$. Effectively this means we can approximate the density of states from Eq. \eqref{eq:densityapp} as

\begin{equation} \label{eq:bsapprox}
 \Omega({E_{\text{tot}}} - \epsilon_{s}) \simeq \Omega ({E_{\text{tot}}}) e^{- (\beta - \gamma {E_{\text{tot}}})\epsilon_{s}}.
\end{equation} 
This gives the size of the subspace with the same ${E_{\text{tot}}},\epsilon_{s}$ that appears in the expansion \eqref{eq:energyproj}. In each of these subspaces the projectors have weights $\propto k_{E_{\text{tot}}} e^{-\beta {E_{\text{tot}}}} p(s) e^{\beta \epsilon_{s}}$. We can order these in decreasing magnitude as $p(1) e^{\beta \epsilon_1} \ge p(2) e^{\beta \epsilon_2} \ge ....\ge p(n) e^{\beta \epsilon_n}$, which is usually refered to as $\beta$-order. This order allows us to draw a majorization diagram for the subspace. An example is shown in Fig. \ref{fig:Maj}.

Now we look at the effect of an arbitrary energy-conserving unitary $U$ to the joint system-bath pair. This has been shown to be equivalent to applying an arbitrary mixture of unitaries in each given total energy subspace \cite{horodecki2013fundamental}. By the Birkhoff-von Neumann theorem, such an arbitrary mixture of unitaries is equivalent to the full set of bistochastic maps, and hence a transformation is possible if the probability distribution of each energy subspace of the initial state \emph{majorizes} that of the energy subspace with that same energy of the final state \cite{marshall2010inequalities}. Given two probability distributions $\lambda(s)$ and $\eta(s)$ with $n$ elements, ordered such that $\lambda(s+1)\ge \lambda(s)$ and $\eta(s+1)\ge \eta(s)$, we say $\lambda(s)$ majorizes $\eta(s)$ if it is true that 
\begin{equation}
\sum_{s=1}^k \lambda(s) \ge \sum_{s=1}^k \eta(s)  \,\,\, \forall k \in \{1,n\}.
\end{equation}

A transition between two different states of system and bath, $\rho \otimes \tau_\beta$ and $\sigma\otimes \tau_\beta$ is hence possible if and only if in each of the total energies $E$ the distribution of $\rho\otimes \tau_\beta$, as seen in the majorization diagram, majorizes that of $\sigma\otimes \tau_\beta$. This is better phrased in terms of \emph{thermomajorization diagrams}. We may define the concave 2D diagram, with origin at $\{0,0\}$, as the curve resulting from consecutively joining the points given by 

\begin{equation}
\Big \{ \frac{\sum_s^k  p(s) e^{ \gamma {E_{\text{tot}}} \epsilon_{s}}}{\sum_s^n  p(s) e^{ \gamma {E_{\text{tot}}} \epsilon_{s}}},\sum^k_s e^{-(\beta -  \gamma {E_{\text{tot}}}) \epsilon_{s}} \Big \}\,\,\, \forall k \in \{1,n\}
\end{equation}
 where the eigenstates are labeled according to the $\beta$-ordering. Each segment of this diagram corresponds to a particular energy eigenstate of the system, and is constructed by subsequently adding the probability weight of each of the individual ``microstates'' of the majorization diagram (the individual bars of Fig. \ref{fig:Maj}).
 
Majorization of two states of a particular subspace of total energy ${E_{\text{tot}}}$ is then equivalent to the inital thermomajorization diagram of energy ${E_{\text{tot}}}$ lying stricly above that of the target one. Hence, we have that for each energy ${E_{\text{tot}}}$ there is a slightly different thermomajorization-like criteria, and for a transition to be possible with complete certainty all of them apply. An example of such a diagram is given in Fig. \ref{fig:ThermoMaj}. 

\begin{figure}
\includegraphics[width=0.5\textwidth]{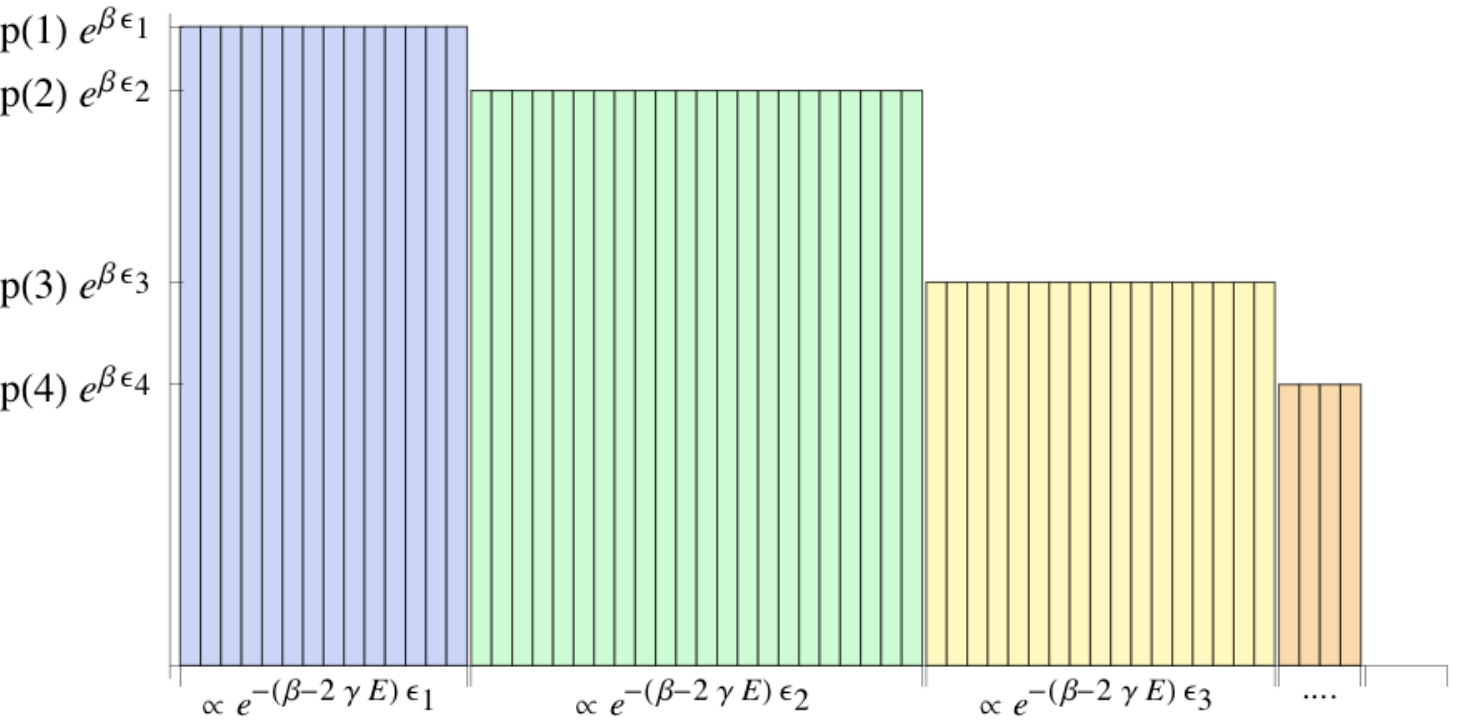}
\caption  {We show the majorization diagram of a particular subspace of total energy $E$ of the joint system-bath product state, where the system is in state $\rho=\sum_s p(s) \ket{s}\bra{s}$ diagonal in energy, and with the energy levels $\beta$-ordered such that $p(k)e^{\beta E_k} \ge p(k+1)e^{\beta E_{k+1}}$.} \label{fig:Maj}
\end{figure}

At ${E_{\text{tot}}}=0$, the average energy, this criteria is equivalent to the standard thermomajorization of \cite{horodecki2013fundamental}, showing that the conditions here are strictly stronger. Indeed, standard thermomajorization is the necessary and sufficient criteria for checking whether a transition is possible in the case of an infinite bath. Equivalently, this means that control over such an infinite bath gives one the power to implement any Gibbs-preserving stochastic matrix on a state \cite{marshall2010inequalities,renes2014work}. With the restriction proposed here, the operations still amount to a set of Gibbs-preserving stochastic matrices (as the thermal state is still the fixed point), but not any such stochastic matrix can be generated with a finite bath only. For other explicit examples of similar limitations, given a different model of a finite bath, see \cite{scharlau2016quantum}.

\begin{figure}
\includegraphics[width=0.5\textwidth]{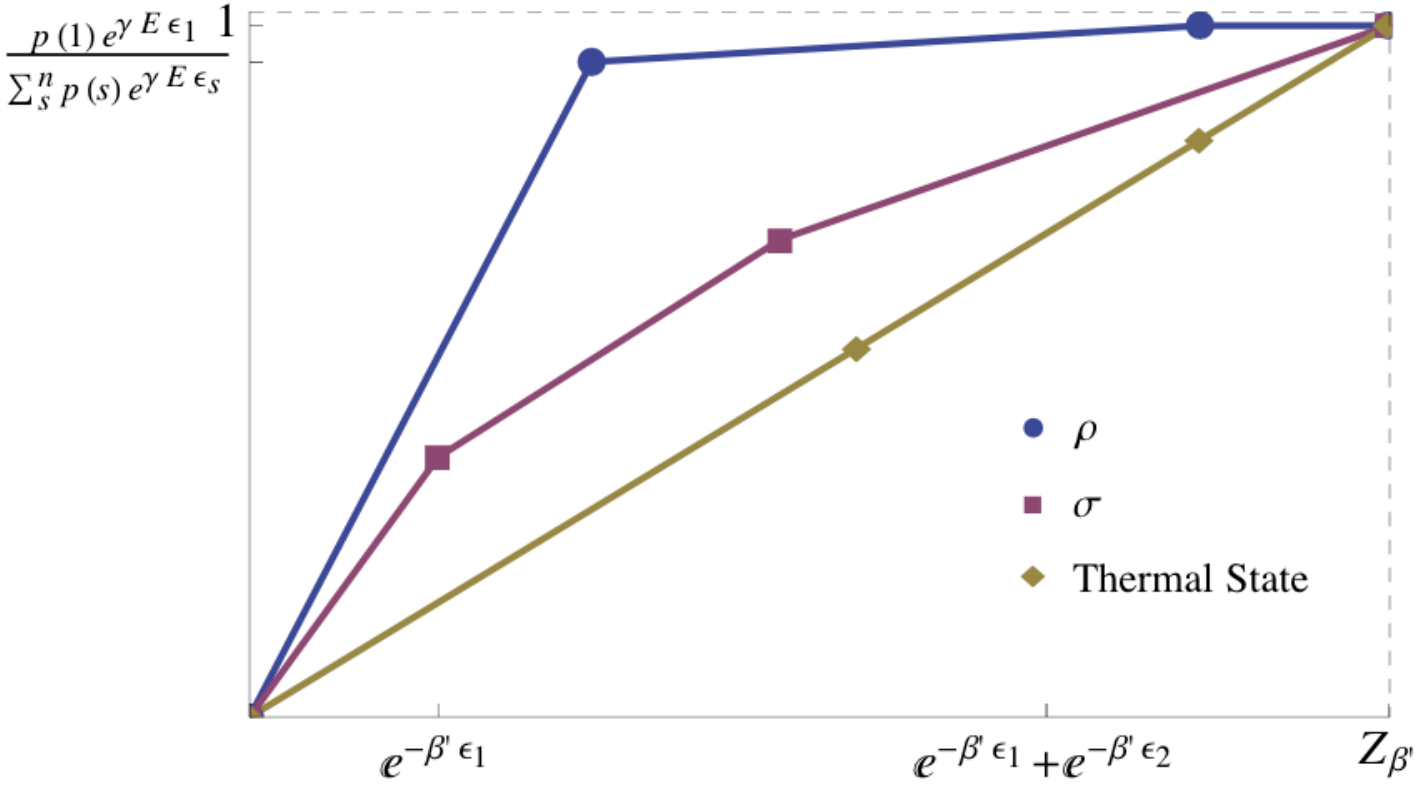}
\caption  {We show an example of a thermomajorization diagram for three different states of a $3$-level system, $\rho$, $\sigma$ and the thermal state, or a total energy of $E$. The $\beta$-order in each is different, and at this particular energy the transitions $\rho \rightarrow \sigma$ is possible, as the diagram of $\rho$ lies above that of $\sigma$.} \label{fig:ThermoMaj}
\end{figure}

Each diagram is given by a particular energy, and to give a conclusive answer we have to check within the whole range of possible $E_\text{tot}$, which may not be feasible in general, as the range of $E_\text{tot}$ is continuous. However, it may be possible in particularly easy cases, such as qubits, where in many cases, such as where the $\beta$-order is the same initially and finally, only the initial slope has to be compared (and this is independent of the particular total energy), or for systems for trivial Hamiltonians, where one recovers standard majorization. These two facts are consistent with the results of \cite{scharlau2016quantum}, where it is shown that \emph{i)} a large class of maps can be applied to qubits using a finite bath and \emph{ii)} for trivial Hamiltonians, or Noisy Operations, one only needs a bath of the same size as the system. 

In any case, considering the full range of total energies may not be necessary in particular cases. We know from Section \ref{sec:finite} that the energy distribution Eq. \eqref{eq:densityapp} is a Gaussian with tails that cause the range of energies we have to check to be very large. However, the further the energy is from the mean, the less likely it is to occur. Hence, this criteria may give a definite answer only if we are willing to make the transition with probability slightly less than 1, and just consider a certain range of total energies. The next subsection is devoted to analyzing the tradeoff between the range of energies that need to be checked and the probability of failure.

\subsection{Total energy distribution}\label{app:dist}

Given that we have the energy distributions of both bath and system, we can obtain the total energy distribution.
\begin{equation} 
p(E_{\text{tot}})\propto \sum_s e^{-\frac{\gamma}{2}(E_{tot}^2- E_{tot} \epsilon_{s})} p(s),
\end{equation}
which is a convex mixture of Gaussians with width $1/\sqrt{\gamma}$, each with its centre offset by $\epsilon_{s}$. If we make the slightly stronger approximation than in Eq. \eqref{eq:bsapprox} above that $1/\sqrt{\gamma}>>\epsilon_{s}$, then we can approximate the total energy distribution to be that of just the bath, so $E_{\text{tot}} \sim E \sim \gamma^{-1/2}$ and 
\begin{equation}
p(E_{\text{tot}}) \approx p_b(E) \propto e^{-\frac{\gamma}{2} E^2}.
\end{equation}
Note that the previous approximation, in Eq. \eqref{eq:bsapprox}, was only $1/\gamma>>\epsilon^2_s$.

Now say we want a transformation to happen with probability $1-\epsilon$. The range of energies $\{ E_{\text{min}},E_{\text{max}} \}$ that have to be checked is then given by the integral equation

\begin{equation}\label{eq:eps}
1-\epsilon = \int_{E_{\text{min}}}^{E_{\text{max}}} \sqrt{\frac{\gamma}{2\pi}} e^{-\frac{\gamma}{2} E^2} \text{d} E.
\end{equation}

We can also define two minimum and maximum effective temperatures given by
\begin{align} \label{eq:bmax}
\beta_{+}= \beta- \gamma E_{\text{max}} \\ \label{eq:bmin}
\beta_{-}=\beta- \gamma E_{\text{min}}
\end{align}

We may take $-E_{\text{min}}=E_{\text{max}}=E^*$, then
\begin{equation}
\epsilon = 2 \sqrt{\frac{\gamma}{\pi}} \int_{E^*}^{\infty} e^{-\gamma E^2} \text{d} E.
\end{equation}

For large $x$ we can approximate the error function as 
\begin{equation}
\text{errf}(x) \simeq 1-\frac{e^{-x^2}}{\sqrt{\pi} x}.
\end{equation}

Hence we can approximate
\begin{equation}
\epsilon \simeq 2^{3/2} \frac{e^{-\frac{\gamma}{2} {E^*}^2}}{\sqrt{\pi \gamma} E^*}
\end{equation}
This expression gives the tradeoff between the range of thermomajorization diagrams we need to check, which will be $E \in \{ -E^* , E^*\}$, and the probability of failing for the criteria being definite.

\subsection{Deterministic work extraction and expenditure} \label{app:detwork}

We are now in a position to calculate the deterministic work in certain transitions, given a probability of error $\epsilon$. We specialize to two important cases where the answer can be written in a closed form: extracting work in a transition to a thermal state (as the least resourceful state), and creating an arbitrary state out of a thermal state.

\subsubsection{Work extraction in total energy subspaces}

From a given initial state $\rho$ we want to extract some deterministic work $W$ given some total energy subspace $E$. We define this to be equivalent to computing what is the maximum $W$ for which the transition
\begin{equation}\label{eq:trans1}
\rho \otimes \ket{0}\bra{0} \rightarrow \tau \otimes \ket{W}\bra{W},
\end{equation}
is possible given total energy $E$. Here, $\tau=e^{-\beta H_S}/Z_S$ is the thermal state of the system, and the second system of the tensor product is the weight. From now we can ignore all the energy levels of the weight except for the two involved in the transition, so that we can effectively think of it as a two-level system with an energy gap between the two given by $W$ (and hence, a Hamiltonian $H_W=\ketbra{W}{W}$). Due to the lack of fluctuations in the weight, we can consider it together with the working system, and hence we can use the results of the previous subsection, which were phrased for the system alone.

The curve of the final state of Eq. \eqref{eq:trans1} is just a straight line, with slope given by $e^{-\beta W}/Z_S$. Hence, in order to check whether the transition is possible we just need to compare the ranks (or the points at which the thermomajorization diagrams reach $1$ in the vertical axis) of the subspace of energy $E$ of the states $\rho \otimes \ket{0}\bra{0} \otimes \tau_\beta$ with $\tau \otimes \ket{W}\bra{W} \otimes \tau_\beta$ \cite{horodecki2013fundamental,gemmer2015single}. 
The former is
\begin{equation}
d_\text{in}=\sum_s \Omega (E-\epsilon_{s}) p(s)^0 \simeq \Omega(E) \sum^n_s e^{-(\beta - \gamma E) \epsilon_{s}} p_s^0,
\end{equation}
and the latter is
\begin{equation}
d_\text{fin}=\sum_s \Omega (E-\epsilon_{s}-W) \simeq \Omega(E) e^{-(\beta - \gamma E) W} \sum^n_s e^{-(\beta - \gamma E) \epsilon_{s}},
\end{equation}
where we have approximated the densities of states the same way as \eqref{eq:bsapprox}.
In general, we'll have that $d_\text{in}\le d_\text{fin}$, and both ranks will be equal in the optimal situation where the maximum work $W$ is extracted. Equating and solving for $W$ yields
\begin{equation}\label{eq:minW}
W= \frac{1}{\beta'} F^{\beta'}_{\text{min}}(\rho),
\end{equation}
where $\beta'=\beta-\gamma E$. This is hence just the $0$-order Renyi free energy with the temperature fluctuating. 

The quantity of Eq. \eqref{eq:minW} is the maximum amount of work one can extract for a given energy $E$. If we want to extract truly deterministic work, the work extracted must however be the same independently of what is the actual total energy. The minimum possible $W$ over all $E$ hence gives the highest possible one that can be the same for all energies. The function $\frac{1}{\beta'} F^{\beta'}_{\text{min}}(\sigma)$ is monotonically increasing in $E$, so the minimum is achieved at our chosen $E=E_{\text{min}}$, or in $\beta'=\beta_{-}$ as defined in Eq. \eqref{eq:bmin}. The optimal deterministic extractable work is then
\begin{equation}\label{eq:minW2}
W_{\text{ext}}= \frac{1}{\beta_{-}} F^{\beta_{-}}_{\text{min}}(\rho),
\end{equation}
with a failure probability $\epsilon$ determined by \eqref{eq:eps}.

\subsubsection{Work of formation in total energy subspaces}
Now, for a given total energy $E$, we want the lowest $W$ for which the transition 
\begin{equation}
\tau \otimes \ket{W}\bra{W} \rightarrow \rho \otimes \ket{0}\bra{0} 
\end{equation}
is possible. Because the curve of the initial state is straight, we only need its slope to be equal to the slope of the first segment (the one corresponding to $E_1$, $\beta$-ordered) of the final one \cite{horodecki2013fundamental,gemmer2015single}. 
The slope of the initial state is
\begin{equation}
\frac{e^{\beta' W}}{Z_{\beta'}},
\end{equation}
and the final one is
%\begin{equation}
%\frac{1}{\sum_j^n  p_j e^{\frac{2 \beta^2}{C}\Delta E E_j}} \text{max}_i \frac{p_i e^{\frac{2 \beta^2}{C}\Delta E E_i}} {e^{-\beta' E_i} }=\frac{1}{\sum_j^n  p_j e^{\frac{2 \beta^2}{C}\Delta E E_j}} \text{max}_i p_i e^{\beta E_i}.
%\end{equation}

\begin{equation}
\frac{1}{\sum_{l}^n  p(l) e^{ \gamma E \epsilon'_{l}}} \text{max}_s p(s) e^{\beta \epsilon_{s}}
\end{equation}

Making them equal and solving for $W$ yields
\begin{equation}
W=\frac{1}{\beta'} (\log{ (Z_{\beta'} \text{max}_s p(s) e^{\beta \epsilon_{s}})} +\log{\frac{1}{\sum_{s'}^n  p(s) e^{\gamma E E_j}}}).
\end{equation}
We can decompose this in three terms:
\begin{equation}\label{eq:wform}
W=\frac{1}{\beta'}(\log{\frac{Z_{\beta'}}{Z_\beta}}+F_{\max}^\beta(\rho)+ \log{\frac{1}{\sum_{s}^n  p(s) e^{ \gamma E \epsilon_{s}}}})
\end{equation}
One is a change of partition function with respect to the infinite-size partition function, the other one is the $\infty$-order Renyi free energy at infinite size, and the last one is a term that is always positive, and converges to zero in the limit of infinite size. Unlike the previous case, this is not equal to the expression in the infinite bath limit with a corrected temperature.

Analogously to the extractable work, if we want to create a state with deterministic work we need to find the maximum of these over $E$. The work $W$ in Eq. \eqref{eq:wform} is also monotonically increasing in $E$, and hence the maximum is achieved at $E=E_{\text{max}}$, or equivalently in $\beta'=\beta_{+}$ as defined in Eq. \eqref{eq:bmax}. The minimum deterministic work we need is hence
\begin{equation}
W_{\text{for}}=\frac{1}{\beta_{+}}(\log{\frac{Z_{\beta_{+}}}{Z_\beta}}+F_{\max}^\beta(\rho)+ \log{\frac{1}{\sum_{s}^n  p(s) e^{\gamma E_{\text{max}} \epsilon_{s}}}})
\end{equation}

\begin{figure}
\includegraphics[width=0.5\textwidth]{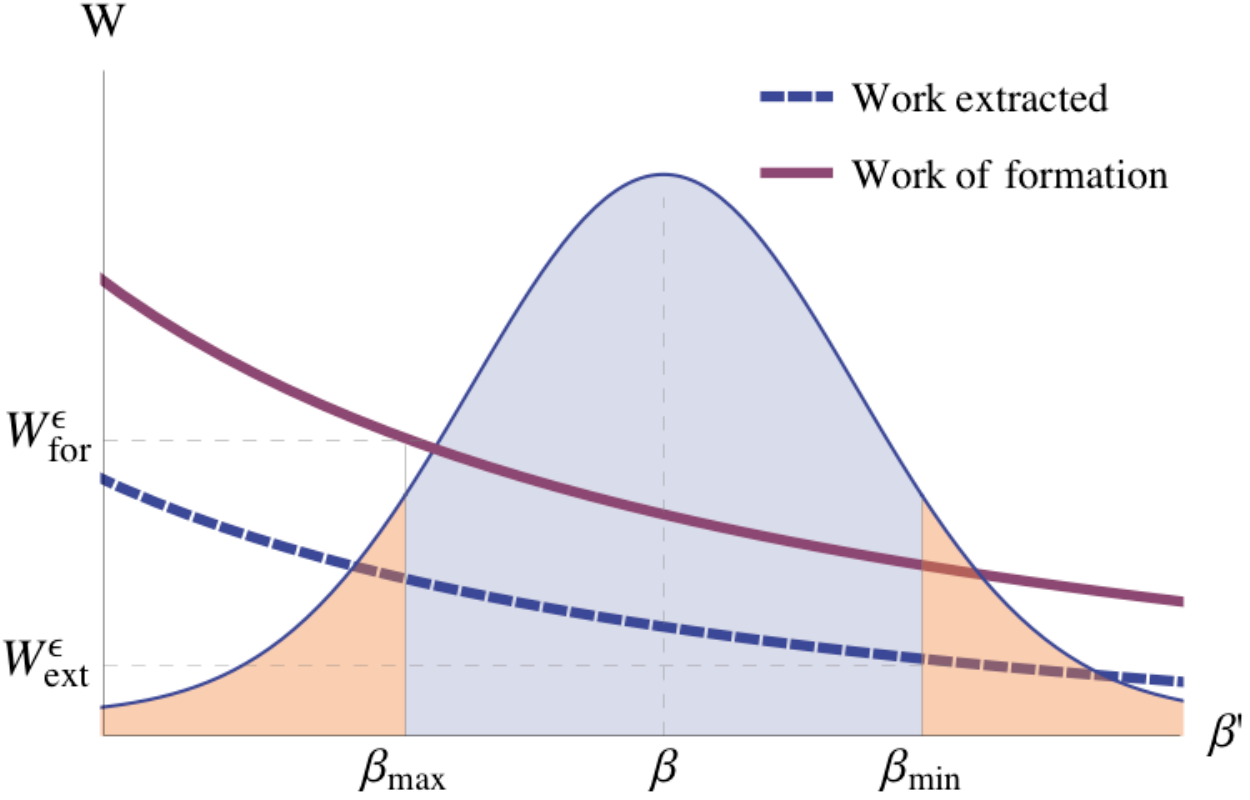}
\caption  {We show the curves for the maximum work extracted and minimum work of formation for a given state $\rho$, given each particular total energy. The distribution of total energies (and hence the distribution of $\beta'=\beta-\gamma E$) is the superposed Gaussian curve, for which a cut in the tails gives the values of effective temperatures $\beta_{-}$ and $\beta_{+}$ that determine the work of formation and the extractable work given some error probability, whose value is equal to the orange region.} \label{fig:EpsW}
\end{figure}

In Fig. \ref{fig:EpsW} we show the maximum extractable work and the minimum work of formation for a particular example, as a function of the modified temperature $\beta'$.

\end{document}